\documentclass[english, 11pt]{article}
\usepackage[T1]{fontenc}
\usepackage[utf8]{inputenc}
\usepackage[a4paper]{geometry}
\geometry{verbose,tmargin=25mm,bmargin=25mm,lmargin=18mm,rmargin=18mm}
\usepackage{color,babel,float}
\usepackage{amsmath,amsthm,amssymb,mathtools}
\usepackage[unicode=true,pdfusetitle,bookmarks=true,bookmarksnumbered=false,
bookmarksopen=false,breaklinks=true,pdfborder={0 0 1},backref=false,colorlinks=true]{hyperref}
\hypersetup{linkcolor=blue, urlcolor=blue, citecolor=blue}
\usepackage[x11names,usenames,dvipsnames,svgnames,table,rgb]{xcolor}
\usepackage[flushleft]{threeparttable}
\usepackage{tikz}
\usepackage{multirow}
\usepackage{epstopdf}
\usepackage{subfigure}
\usepackage{caption} 
\usepackage{dblfloatfix} 
\usepackage[title]{appendix}
\usepackage{authblk}
\usepackage{algorithmicx}
\usepackage{algpseudocode}
\usepackage{setspace}
\usepackage[backend=bibtex,style=numeric,sorting=nty,giveninits=true]{biblatex}
\addbibresource{bib.bib}


\theoremstyle{plain}
\ifx\thechapter\undefined
\newtheorem{thm}{\protect\theoremname}
\else
\newtheorem{thm}{\protect\theoremname}[chapter]
\fi
\theoremstyle{plain}
\newtheorem{prop}[thm]{\protect\propositionname}
\newtheorem{lemma}[thm]{\protect\lemmaname}

\floatstyle{ruled}
\newfloat{algorithm}{tbp}{loa}
\providecommand{\algorithmname}{Algorithm}
\floatname{algorithm}{\protect\algorithmname}

\usepackage{pdflscape}
\usepackage{lettrine}
\renewcommand\[{\begin{equation}}
\renewcommand\]{\end{equation}}
\newcommand{\real}{\mathbb{R}} 
\newcommand{\nn}{\mathbb{N}} 
\newcommand{\AAA}{\mathcal{A}}
\newcommand{\BBB}{\mathcal{B}}
\newcommand{\CCC}{\mathcal{C}}
\newcommand{\EEE}{\mathcal{E}}
\newcommand{\NNN}{\mathcal{N}}
\newcommand{\VVV}{\mathcal{V}}
\newcommand{\bone}{{\bf 1}}
\newcommand{\ba}{{\bf a}}
\newcommand{\be}{{\bf e}}
\newcommand{\bx}{{\bf x}}
\newcommand{\by}{{\bf y}}
\newcommand{\bo}{{\bf 0}}
\newcommand{\bs}{{\bf s}}
\newcommand{\bu}{{\bf u}}
\newcommand{\bv}{{\bf v}}
\newcommand{\bw}{{\bf w}}
\newcommand{\baa}{{\bf A}}
\newcommand{\bbb}{{\bf B}}
\newcommand{\bdd}{{\bf D}}

\newcommand{\bpp}{{\bf P}}
\newcommand{\bqq}{{\bf Q}}

\newcommand{\bww}{{\bf W}}
\newcommand{\bxi}{{\boldsymbol{\xi}}}

\newcommand{\norm}[1]{\left\Vert {#1} \right\Vert} 
\newcommand{\act}[1]{\left\langle {#1} \right\rangle} 
\newcommand{\Seq}[2]{\{{#1}^{{#2}}\}_{{#2} \in \mathbb{N}}}
\DeclareMathOperator*{\argmin}{argmin}
\DeclareMathOperator*{\Ker}{ker}
\DeclareMathOperator*{\Span}{span}

\theoremstyle{plain}
\newtheorem{assumption}{Assumption}

\makeatother

\providecommand{\propositionname}{Proposition}
\providecommand{\theoremname}{Theorem}
\providecommand{\lemmaname}{Lemma}

\begin{document}

\title{Alternating Minimization Based First-Order Method for the Wireless Sensor Network Localization Problem\thanks{This research was partially supported by Israel Science Foundation Grant 1460/19.}}
\author{Eyal Gur}
\author{Shoham Sabach}
\author{Shimrit Shtern}

\affil{\small Faculty of Industrial Engineering and Management, Technion - Israel Institute of Technology, Technion city, Haifa 3200003, Israel\\
	\href{mailto:eyal.gur@campus.technion.ac.il}{eyal.gur@campus.technion.ac.il}, \href{mailto:ssabach@ie.technion.ac.il}{ssabach@ie.technion.ac.il}, \href{mailto:shimrits@technion.ac.il}{shimrits@technion.ac.il}}
\date{}

\maketitle

\doublespacing
\begin{abstract}
	We propose an algorithm for the Wireless Sensor Network localization problem, which is based on the well-known algorithmic framework of Alternating Minimization. We start with a non-smooth and non-convex minimization, and transform it into an equivalent smooth and non-convex problem, which stands at the heart of our study. This paves the way to a new method which is globally convergent: not only does the sequence of objective function values converge, but the sequence of the location estimates also converges to a unique location that is a critical point of the corresponding (original) objective function. The proposed algorithm has a range of fully distributed to fully centralized implementations, which all have the property of global convergence. The algorithm is tested over several network configurations, and it is shown to produce more accurate solutions within a shorter time relative to existing methods.
\end{abstract}


\section{Introduction} \label{Sec:Intorduction}
	S{ensor} networks consist of several wireless sensors located in a given area, for purposes such as environment monitoring, battlefields surveillance, etc. (see \cite{ammari2014art} for more examples and details). In our setting, each sensor is composed of a low-powered radio transceiver which monitors its immediate surroundings (e.g., temperature, sound, etc.), and a processor that collects and manipulates the data. Therefore, the location of each sensor has a significant role. Since the number of sensors in a network can be large (even thousands of sensors), it is not cost effective to equip each one of them with a GPS device, nor to deploy the sensors in a known logged location.

	A network in this context is described as a group of $K$ sensors, each denoted by an index in the set $\left\{ 1 , 2 , \ldots , K \right\}$. The communication between two sensors $i$ and $j$ is made available only if the distance between the two is at most $r \geq 0$, which is a given radio range for communication. In this case, we say that the sensors $i$ and $j$ are neighbors. The Wireless Sensor Network (WSN) localization problem aims at finding the location of all sensors in the network based on a few anchors, which are sensors with a known location (e.g., via GPS devices), and noisy distance measurements from each sensor to its neighbors. Mathematically speaking, given $K$ sensors and anchors in $\real^{n}$, where $m$ of which ($m < K$) are anchors, we wish to estimate the location of all $N = K - m$ sensors in $\real^{n}$. 

	We denote, for simplicity, the set of the $N$ sensors to be located by $\VVV := \left\{ 1 , 2 , \ldots , N \right\}$, and the set of the $m$ anchors by $\AAA := \left\{ N + 1 , N + 2 , \ldots , K \right\}$. For an anchor $j \in \AAA$ we denote its given location by $\ba_{j} \in \real^{n}$. Additionally, we denote by $\EEE$ the set of all pairs of neighboring sensors, i.e., $\left(i , j\right) \in \EEE$ if $i < j$ and sensors $i$ and $j$ are neighbors, which means that the distance between them is at most $r$. For each pair $\left(i , j\right) \in \EEE$, the (noisy) measurement of their distance is denoted by $d_{ij}$ (following \cite{CXG2015}, we assume w.l.o.g. that $d_{ij} = d_{ji}$)\footnote{Note that, in this paper, we use the standard assumption that all distance measurements for sensor pairs which are within the communication radius are available.}. We use the notation $M = \lvert \EEE \rvert$. Note that since the locations of the anchors are known exactly, we can ignore all edges between anchor nodes, which means that if $\left(i , j\right)\in\EEE$ then $i\in \VVV$. Moreover, we denote by $\bx_{i} \in \real^{n}$ the true location of sensor $i \in \left\{ 1 , 2 , \ldots , N \right\}$, and we denote by $\bx \in \real^{nN}$ the vector obtained by concatenating the vectors $\bx_{i}$ for all $i \in \left\{ 1 , 2 ,\ldots , N \right\}$ into a single column vector. Using these notations, in this work we adopt the following non-smooth and non-convex formulation of the WSN localization problem:
	\begin{equation}
		\underset{\bx \in \real^{nN}}{\min} \left\{ \sum_{\left(i , j\right) \in \EEE_{1}} \left(\norm{\bx_{i} - \bx_{j}} - d_{ij}\right)^{2} +\sum_{\left(i , j\right) \in \EEE_{2}} \left(\norm{\bx_{i} - \ba_{j}} - d_{ij}\right)^{2}\right\}, \label{Problem}
\end{equation}	
	where $\EEE_{1}$ is the subset of $\EEE$ with all pairs $\left(i , j\right)$ for which both sensors $i$ and $j$ are non-anchors, while $\EEE_{2}$ is the subset of all pairs $\left(i , j\right)$ in $\EEE$ for which sensor $i$ is a non-anchor and sensor $j$ is an anchor. 
	
	This formulation enjoys a statistical interpretation as of finding the maximum-likelihood estimator of the locations, given that the measurement noises are independently normally (Gaussian) distributed (see, for example, \cite{press2007numerical}).
	
	It should be noted that a closely related problem is the Single Source Localization (SSL), which can be seen as a particular case of the WSN problem for $N = 1$. However, the network variant possess several challenges which do not exist in the SSL problem both from an algorithmic and a computational standpoints as we highlight below (for instance, distributed and/or parallel implementations).
	
\subsection{Existing Methods for Convex Relaxations} \label{SSec:Relaxations}
	One common approach for tackling the non-convex Problem \eqref{Problem}, is by solving certain convex relaxations of the problem\footnote{The literature on convex relaxations is extensive, however, since this is not the focus of our paper we do not give a thorough review of this approach, but rather highlight few ideas and their limitations.}. For example, in \cite{CXG2015}, the authors show that the first summed terms in Problem \eqref{Problem} are just the squared distance in $\real^{n}$, of the point ${\bx_{i} - \bx_{j}}$ to the sphere with radius $d_{ij}$ centered at the origin (similarly for the other sum in the objective). Thus, the convexification presented in \cite{CXG2015} is derived by taking the squared distance to the ball of radius $d_{ij}$, instead to the sphere. This yields a smooth and convex problem with separable ball constraints that can be solved by classical techniques of convex optimization, especially the well-known Accelerated Projected Gradient method of Nesterov \cite{N1983}.

	Another popular relaxation, which stands at the basis of many works in this area, is based on the classical ``lifting'' technique, in which the non-convexity in the objective is replaced by quadratic equality constraints. These constraints are then relaxed to inequality constraints, and the problem is reformulated as a Semi-Definite Programming (SDP) problem (see, for instance, \cite{BLWY2006,simonetto2014distributed,lui2009semi} and references therein). These SDP problems can then be solved using off-the-shelf external solvers that utilize interior-point methods, and are known to be computationally expensive. Thus, these techniques may suffer from an increased computational complexity, which is less desirable when dealing with large-scale problems, where thousands of sensors are deployed (it should be remembered that the "lifting" process significantly increases the dimension of the problem to be solved). Thus, in many cases a further relaxation of the SDP constraints, called edge-based SDP \cite{WZYS2008}, is used in order to obtain a distributed algorithm, which is less computationally demanding (see, for instance, \cite{simonetto2014distributed} and \cite{lui2009semi}). However, this still requires the implementation of an interior-point method on each sensor, which could be a major limitation due to the small computational power of individual sensors.

	While convex relaxations guarantee convergence to a global minimum point, this point is not necessarily a global minimizer, or even a critical point, of the original (non-relaxed) formulation. See \cite{PE2018} for a discussion on this phenomenon. In order to demonstrate this phenomenon, we have also conducted a numerical comparison between our method, which is applied to the original non-convex and non-smooth formulation of the problem, to the relaxation approach suggested in \cite{CXG2015} (see Section \ref{Sec:Numerics}). Indeed, due to the superior estimation quality of our method compared to the relaxation, as well as similar evidences in previous papers, we do not compare our method to other methods which aim at solving any kind of convex relaxation of the problem.

	An alternative formulation used in the literature, that also allows for inaccurate location estimates of the anchor sensors (see, for example, \cite{BLTYW2006} and references therein), is given by the following smooth and non-convex problem:
	\begin{equation} 
		\underset{\bx \in \real^{nN}}{\min} \left\{ \sum_{\left(i , j\right) \in \EEE_{1}}\left(\norm{\bx_{i} - \bx_{j}}^{2} - d_{ij}^{2}\right)^{2} + \sum_{\left(i , j\right) \in \EEE_{2}} \left(\norm{\bx_{i} - \ba_{j}}^{2} - d_{ij}^{2}\right)^{2}\right\}. \label{SquaredFormulation}
\end{equation}	
	While this formulation is smooth, it lacks the statistical interpretation of Problem \eqref{Problem}.	Additionally, it is suggested in \cite{BSL2008}, that at least for the Single Source Localization problem, solving formulation \eqref{Problem} produces more accurate estimations when compared with formulation \eqref{SquaredFormulation}. Moreover, it does not change the fact that Problem \eqref{SquaredFormulation} remains non-convex. Therefore, also in this case convex relaxations, such as SDP relaxations, are often used. However, solving these relaxations come with drawbacks, which have already been discussed above. {Therefore, even though many methods were developed to tackle this formulation, we focus our study on the formulation \eqref{Problem} and provide a comparison between methods that tackles this model.}

	It is worth mentioning that the literature on the WSN localization problem also includes heuristic approaches (see, for example, \cite{NM2009}, \cite{GP2016} and \cite{agarwal2012sensor}). However, these approaches have either weak or no theoretical guarantees on both the convergence of the sequence and quality of the resulting solution. 

\subsection{Related First-Order Methods} \label{SSec:First}
	In this paper, we aim at solving the original unconstrained non-smooth and non-convex Problem \eqref{Problem} directly, using first-order methods, namely methods that exploit information on values and (sub)gradients of the involved functions. We now focus on three relevant works \cite{PE2018}, \cite{SHCJ2010} and \cite{SJJ2014}, which also propose, among other things, first order methods for solving Problem \eqref{Problem}. In these works, the first-order method to solve Problem \eqref{Problem} is used to improve the localization accuracy of solutions that were obtained from solving some convex relaxation of Problem \eqref{Problem}, see Section \ref{SSec:Relaxations} for a discussion about convex relaxations. The description of this two-stage strategy is postponed to Section \ref{Sec:Numerics}. At this moment, we focus the discussion on the first-order algorithms as standalone optimization algorithms for solving Problem \eqref{Problem}.
		
	In \cite{PE2018}, a linear constrained equivalent reformulation of Problem \eqref{Problem} is studied and a fully distributed ADMM method was suggested to solve the non-convex reformulated problem. The theoretical result of this work, which is based on \cite{ABMS2007}, says that the generated sequence convergence to a local minima\footnote{We have found this result misleading. In \cite{ABMS2007} the authors provides conditions, in the non-convex setting, for a sequence generated by ADMM to have limit points, and for these limit points to be KKT points, however no guarantees for local minima are established there.}. While this algorithm performance seems to be promising, it posses some practical  challenges: (i) tuning of several parameters, (ii) a sub-algorithm to solve the non-convex problems with respect to the non-anchor sensors and (iii) the algorithm may generate iterations which are not well-defined. We will discuss these limitations further in Section \ref{Sec:Numerics}.

	In \cite{SHCJ2010}, a reformulation of Problem \eqref{Problem} as a constrained minimization is considered\footnote{It should be noted that the authors of \cite{SHCJ2010} study a more general form of Problem \eqref{Problem}, which allows for the anchors to have noisy location estimates.}, similar to the reformulation that we suggest below. This reformulation requires the definition of new variables, and immediately suggests an Alternating Minimization based algorithm that they call Non-Convex Sequential Greedy method. The obtained algorithm, as the algorithm of \cite{PE2018}, is not always well-defined (as we discuss below). This is a fully distributed algorithm that is shown to generate a converging sequence of function values. The authors also show that any limit point of the generated sequence is a KKT point of the reformulated problem (but no relations to the original problem are provided)
	
	In \cite{SJJ2014}, the authors reformulate the problem as in \cite{CXG2015}, but instead of employing a convex relaxation, they aim at solving the reformulated non-convex problem using the classical Projected Gradient algorithm. However, the paper does not provide any theoretical guarantees about their algorithm.
	
\subsection{Our Contribution}
	As we wish to solve the original non-smooth formulation given in Problem \eqref{Problem} using first-order methods, we first address the inherent non-smoothness of the problem. We adopt a simple variational representation of the Euclidean norm, which enables us to transform Problem \eqref{Problem} into an equivalent constrained smooth and non-convex minimization problem. This formulation, which is closely related to the formulation used in \cite{SHCJ2010}, will serve as a starting point for our study.

	Our first contribution is an algorithmic framework for solving the original non-convex Problem \eqref{Problem}, that includes the whole spectrum ranging from a fully centralized to a fully distributed algorithm. The concept of Alternating Minimization (AM), which stands at the heart of our framework, exploits the structure of our reformulation, to produce a well-defined and parameter-free algorithm. Moreover, our second contribution is proving theoretical guarantees of this algorithm, which are stronger in comparison to those obtained in the two mentioned works \cite{PE2018} and \cite{SHCJ2010}. More precisely, we prove global convergence of the generated sequence of network location estimates (in the sense that the whole sequence converges), that is, the location estimates of each sensor converges to a unique limit point\footnote{Since we are dealing with a non-convex problem, this point obviously depends on the starting point of the algorithm.}. In addition, we show that this unique limit point of the network location estimates is a critical point of the function under minimization in Problem \eqref{Problem}, {where a point is a critical point if its sub-differential contains the zero vector}. In order to do so, we use a recent proof technique, which was proposed in \cite{AB2009} and \cite{ABS2013}, and later was unified and simplified in \cite{BST2014} (see also \cite{BSTV2018} for a recent concise description of this technique). To the best of our knowledge, this is the first work that proves a global convergence result for a first-order algorithm that solves the original non-convex WSN problem.

	As mentioned above, our algorithm implementation can range from fully distributed, i.e., each sensor updates its own location estimate through information received over communication, to fully centralized, where the location estimate of the entire network is calculated on one central processor.
	
	Additionally, we show that the fully distributed version of our algorithm, to be developed below, can be seen as a modification of the algorithm presented in \cite{SHCJ2010}, which is always well-defined, and enjoys our stronger convergence results.

	The paper is organized as follows: the smooth and non-convex equivalent formulation of Problem \eqref{Problem} is introduced in Section \ref{Sec:Formulation}. In Section \ref{Sec:Algorithms}, we introduce the fully centralized and fully distributed algorithms for tackling Problem \eqref{Problem}. We present the Unifying AM Algorithm that captures both centralized and distributed versions, in Section \ref{Sec:Unifying}, and discuss the capability of this unified approach to also capture partially-distributed and partially-parallelized algorithmic variations. We analyze the convergence of the Unifying AM Algorithm in Section \ref{Sec:Analysis}. In Section \ref{Sec:Numerics}, we conduct numerical experiments to evaluate the performance of our algorithms and compare them with existing methods. 

\section{Problem Formulation} \label{Sec:Formulation}
	In the introduction above, we have already mentioned that since Problem \eqref{Problem} is non-smooth, our first step is to derive an equivalent smooth reformulation of the problem, where by equivalent we mean that a critical point of the reformulated problem is also a critical point of Problem \eqref{Problem}, and vice versa.

	Notice that the first sum in the objective function of Problem \eqref{Problem} can be written explicitly as				
	\begin{equation} \label{NonsmoothTerm}
		\sum_{\left(i , j\right) \in \EEE_{1}}\left(\norm{\bx_{i} - \bx_{j}}^{2} - 2d_{ij}\norm{ \bx_{i} - \bx_{j}} + d_{ij}^{2}\right).
	\end{equation}
	The non-smoothness of \eqref{NonsmoothTerm} comes from the terms $\norm{\bx_{i} - \bx_{j}}$ (the true distance between sensors $i$ and $j$) for all $\left(i , j\right) \in \EEE_{1}$. To eliminate this, we first denote by $\BBB \equiv B_{1}\left(\bo_{n}\right)$ the unit ball in $\real^{n}$ centered at the origin, and by $\BBB^M$ the Cartesian product of $M$ such balls, where $M$ is the number of edges in the network. Motivated by the recent work \cite{LSTZ2017} that focuses on the Single Source Localization problem, we use the following fact on the WSN problem, which easily follows from the Cauchy-Schwartz inequality:
	\begin{equation*}
		\norm{\bx_{i} - \bx_{j}} = \max\left\{ \bu_{ij}^{T}\left(\bx_{i} - \bx_{j}\right) : \, \bu_{ij} \in \BBB \right\},
	\end{equation*}
	where $\bu_{ij} \in \real^{n}$ is an auxiliary variable defined for each pair $\left(i , j\right) \in\EEE_{1}$, that achieves the optimal value of $\norm{\bx_{i} - \bx_{j}}$ at $\left(\bx_{i} - \bx_{j}\right)/\norm{\bx_{i} - \bx_{j}}$ when $\bx_{i} - \bx_{j} \neq \bo_{n}$ (otherwise any $\bu_{ij} \in \BBB$ would be an optimal solution, but in this paper we will always take $\bu_{ij} = \bo_{n}$).
	
	Similar manipulations can be done on the second sum in the objective function of Problem \eqref{Problem}. Therefore, by omitting the constant terms $d_{ij}^{2}$, $\left(i , j\right) \in \EEE$, we arrive at the following smooth and constrained equivalent reformulation of Problem \eqref{Problem}:
	\begin{equation}
		\underset{\begin{smallmatrix}\bx \in \real^{nN} \bu \in \BBB^{M}\end{smallmatrix}}{\min} \left\{ \sum_{\left(i , j\right) \in \EEE_{1}} \left(\norm{\bx_{i} - \bx_{j}}^{2} - 2d_{ij}\bu_{ij}^{T}\left(\bx_{i} - \bx_{j}\right)\right)+ \sum_{\left(i , j\right) \in \EEE_{2}}\left(\norm{\bx_{i} - \ba_{j}}^{2} - 2d_{ij}\bu_{ij}^{T}\left(\bx_{i} - \ba_{j}\right)
		\right)\right\}, \label{ObjG}
	\end{equation}
	where $\bu$ is the vector obtained by concatenating the vectors $\bu_{ij}$ for all $\left(i , j\right) \in \EEE$ into a single column vector.
	
	For the sake of simplifying the developments to come, we wish to rewrite Problem \eqref{ObjG} using matrix notations. To this end, one could notice that the network of the original problem can be viewed as an undirected graph, where the vertices are the sensors, and $\EEE$ is the set of edges. For simplicity, we fix some ordering of the set $\EEE$ such that all edges in $\EEE_{1}$ proceed the edges in $\EEE_{2}$. We now define several essential matrices for the formulation of the WSN localization problem:
	\begin{itemize}
		\item $\tilde{\bqq} \in \real^{|\EEE_{1}| \times N}$ is the arc-node incidence matrix of the subgraph containing only the vertices in $\VVV$ and edges in $\EEE_{1}$, i.e., if $\left(i , j\right) \in \EEE_{1}$ is the $l$-th edge in the set $\EEE_{1}$, then $\tilde{\bqq}_{li} = 1$, $\tilde{\bqq}_{lj} = -1$ and all other entries in the $l$-th row are equal to $0$.
		\item $\tilde{\baa} \in \real^{|\EEE_{2}| \times N}$ is the indicator matrix of the set of edges $\EEE_{2}$, i.e., $\tilde{\baa}_{li} = 1$ if the $l$-th edge of $\EEE_{2}$ connects the sensor $i \in \VVV$ to an anchor in $\AAA$ and all other entries in the $l$-th row are equal to $0$.
		\item  $\tilde{\bbb} \in \real^{|\EEE_{2}| \times m}$ is such that $\tilde{\bbb}_{lj} = -1$ if the $l$-th edge of $\EEE_{2}$ connects a certain sensor in $\VVV$ to the anchor $j + N \in \AAA$. It should be noted that the matrices $\tilde{\baa}$ and $\tilde{\bbb}$ form together an arc-node incidence matrix of the subgraph containing the vertices in $\VVV \cup \AAA$ and edges in $\EEE_{2}$.
		\item $\tilde{\bdd} \in \real^{M \times M}$ is the diagonal matrix with $d_{ij}$, $\left(i,j\right) \in \EEE$, as the entries on the diagonal.
		\item $\tilde{\bpp} \equiv \tilde{\bqq}^{T}\tilde{\bqq} + \tilde{\baa}^{T}\tilde{\baa} \in \real^{N\times N}$ (note that the matrix $\tilde{\bqq}^{T}\tilde{\bqq}$ is the so-called Laplacian matrix of the corresponding subgraph).
	\end{itemize}
	Since each sensor is in $\real^{n}$ it will be convenient to define the Kronecker matrix product of $\tilde{\baa}$, $\tilde{\bbb}$, $\tilde{\bdd}$, $\tilde{\bpp}$ and $\tilde{\bqq}$ with the identity matrix $\mathbf{I}_{n}$, which are denoted by $\baa$, $\bbb$, $\bdd$, $\bpp$ and $\bqq$, respectively. 

	Finally, denoting by $\ba \in \real^{nm}$, the vector derived by concatenating the vectors $\ba_{j}$, $j \in \AAA$, into a single column vector, we see that Problem \eqref{ObjG} can be written equivalently as:
	\begin{equation} \label{ReformulationMatrix}
		\underset{\begin{smallmatrix}\bx \in \real^{nN} \\ \bu \in \BBB^{M}\end{smallmatrix}}{\min} \left\{ \bx^{T}\bpp\bx - 2\left(\bww\bu + \baa^{T}\bbb\ba\right)^{T}\bx + 2\bs^{T}\bu \right\},
	\end{equation}
	where we define the matrix $\bww = \left[\bqq^{T} , \baa^{T}\right]\bdd$ and the vector $\bs = \left(\left[\bo_{nm \times n|\EEE_{1}|} , \bbb^{T}\right]\bdd\right)^{T}\ba$.

	We denote the objective function of Problem \eqref{ReformulationMatrix} by $G\left(\bx , \bu\right)$. We will also use the following function which takes the constraints on the block $\bu$ via its indicator formulation
	\begin{equation} \label{ObjF}
		F\left(\bx , \bu\right) := G\left(\bx , \bu\right)  + \sum_{\left(i , j\right) \in \EEE} \delta_{\BBB}\left(\bu_{ij}\right),
	\end{equation}
	where $\delta_{\BBB}\left(\cdot\right)$ denotes the indicator function of the set $\BBB$ (which is defined to be zero in $\BBB$ and $+\infty$ outside). This results in the following unconstrained optimization problem
	\begin{equation} \label{ReformulationCompact}
		\min \left\{ F\left(\bx , \bu\right) : \, \bx \in \real^{nN}, \, \bu \in \real^{nM} \right\}.
	\end{equation}
	
	This formulation will be the starting point of our study. We are interested in designing a simple and parameter-free algorithm for solving Problem \eqref{Problem} via its equivalent reformulation \eqref{ReformulationCompact}, which globally converges to a critical point of the original problem \eqref{Problem}. To this end, we will show that our algorithm finds a critical point of $F\left(\cdot , \cdot\right)$, that is, a pair $\left(\bx , \bu\right)$ which satisfies
	\begin{equation*}
		\left(\bo_{nN} , \bo_{nM}\right) \in \partial F\left(\bx , \bu\right) = \left\{ \nabla_{\bx} F\left(\bx , \bu\right) \right\} \times \left\{ \partial_{\bu} F\left(\bx , \bu\right) \right\},
	\end{equation*}
	where $\partial \varphi$ denotes the classical subdifferential of $\varphi$, that can be used here since the function $\bu \rightarrow F\left(\bx , \bu\right)$ is convex for any fixed $\bx \in \real^{nN}$.
	
	In order to simplify the inclusion above we will first need the following additional notation. We denote by $\NNN(i)$ the set of sensor $i$'s neighbors, i.e., $j \in \NNN(i)$ means that $\left(i , j\right) \in \EEE$ or\footnote{Since $\EEE$ includes only pairs $\left(i , j\right)$ for which $i < j$, we need to consider also the pairs where $j < i$.} $\left(j , i\right) \in \EEE$. We also denote $M_{i} = \left| \NNN(i) \right|$ and thus $M = \frac{1}{2} \sum_{i = 1}^{N} M_{i}$. 

	Therefore, the inclusion above easily translates into the following conditions:
	\begin{equation} \label{OptX}
		\bo_{n} = \hspace{-0.1in} \sum_{j \in \NNN(i) \cap \VVV} \hspace{-0.1in}\left(\bx_{i} - \bx_{j} - d_{ij}\bu_{ij}\right) + \hspace{-0.15in}\sum_{j \in \NNN(i) \cap \AAA} \hspace{-0.1in}\left(\bx_{i} - \ba_{j} - d_{ij}\bu_{ij}\right), 
	\end{equation}
	for all $i \in \left\{ 1 , 2 , \ldots , N \right\}$ and
	\begin{align}
		\bo_{n} & \in -2d_{ij}\left(\bx_{i} - \bx_{j}\right) + \partial \delta_{\BBB}\left(\bu_{ij}\right), \quad \,\, \forall \,\,  \left(i , j\right) \in \EEE_{1}, \label{OptU1} \\
		\bo_{n} & \in -2d_{ij}\left(\bx_{i} - \ba_{j}\right) + \partial \delta_{\BBB}\left(\bu_{ij}\right), \quad \,\, \forall \,\,  \left(i , j\right) \in \EEE_{2}, \label{OptU2}
	\end{align}
	where we define $\bu_{ji} = -\bu_{ij}$ for all $\left(i , j\right) \in\EEE_{1}$. In the Appendix below we prove that a vector $\bx$, which satisfies these inequalities, must be a critical point of the original problem \eqref{Problem} as recorded in the following result.
	\begin{prop} \label{P:CriticalCoincide}
		Let $\left(\bx^{\ast} , \bu^{\ast}\right)$ be a critical point of Problem \eqref{ReformulationCompact}, then $\bx^{\ast}$ is a critical point of the original problem \eqref{Problem}.
	\end{prop}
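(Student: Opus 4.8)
The plan is to read $\bu^{\ast}$ off from the optimality of the $\bu$-block, substitute it into the stationarity condition \eqref{OptX} for the $\bx$-block, and recognize the resulting identity as the inclusion $\bo_{nN} \in \partial f(\bx^{\ast})$, where $f$ is the objective of Problem \eqref{Problem} and $\partial$ its (Clarke) subdifferential. The starting observation is that \eqref{OptX} is nothing but the statement $\tfrac{1}{2}\nabla_{\bx} G(\bx^{\ast} , \bu^{\ast}) = \bo_{nN}$, since differentiating $G$ with respect to the block $\bx_{i}$ produces exactly twice the bracketed sum appearing in \eqref{OptX}.

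First I would eliminate $\bu^{\ast}$. Conditions \eqref{OptU1} and \eqref{OptU2} say that $2 d_{ij}(\bx_{i}^{\ast} - \bx_{j}^{\ast})$ (respectively $2 d_{ij}(\bx_{i}^{\ast} - \ba_{j})$) lies in the normal cone to $\BBB$ at $\bu_{ij}^{\ast}$. Using the explicit form of this normal cone one obtains the dichotomy: when $\bx_{i}^{\ast} \neq \bx_{j}^{\ast}$ the vector $\bu_{ij}^{\ast}$ is forced to equal the unit vector $(\bx_{i}^{\ast} - \bx_{j}^{\ast}) / \norm{\bx_{i}^{\ast} - \bx_{j}^{\ast}}$, while when $\bx_{i}^{\ast} = \bx_{j}^{\ast}$ any $\bu_{ij}^{\ast} \in \BBB$ is admissible (analogously for anchor edges). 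The key local computation is the subdifferential of the single-edge term $\bz \mapsto (\norm{\bz} - d_{ij})^{2}$: it is the singleton $\{ 2\bz - 2 d_{ij}\, \bz / \norm{\bz} \}$ whenever $\bz \neq \bo_{n}$, and it is the centered ball $\{ \by : \norm{\by} \leq 2 d_{ij} \}$ at $\bz = \bo_{n}$, the origin being a local maximum of this term. Comparing the two expressions, the edge contribution $2(\bx_{i}^{\ast} - \bx_{j}^{\ast} - d_{ij}\bu_{ij}^{\ast})$ that enters \eqref{OptX} equals the gradient of the edge term in the non-degenerate case and lies in the ball $\{\by : \norm{\by}\leq 2d_{ij}\}$ in the degenerate case; in both cases it is a legitimate subgradient of that edge's term. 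It is precisely this matching (the admissible $-2d_{ij}\bu_{ij}^{\ast}$ sweeping out the whole ball of radius $2d_{ij}$) that forces the use of the Clarke subdifferential, and in particular accommodates the convention $\bu_{ij}^{\ast} = \bo_{n}$.

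With this identification, \eqref{OptX} multiplied by $2$ states exactly that a selection of per-edge subgradients of the terms of $f$ sums to $\bo_{nN}$ at $\bx^{\ast}$. On the generic set where no two neighboring sensors coincide, $f$ is differentiable at $\bx^{\ast}$ and this identity reads $\nabla f(\bx^{\ast}) = \bo_{nN}$, so the conclusion is immediate. The main obstacle is the degenerate case of coincident sensors: there the edge term attains a local maximum at the origin and is not Clarke regular, so the subdifferential sum rule only places $\partial f(\bx^{\ast})$ inside the sum of the per-edge subdifferentials, which is the wrong direction for certifying membership. I would resolve this by decomposing $f$ into its smooth part, for which additivity of the subdifferential is exact, plus the separable sum over the coincident-sensor edges, and then computing $\partial f(\bx^{\ast})$ directly from this decomposition; checking that the direct computation recovers $\bo_{nN}$ is the delicate point, while the remaining normal-cone and chain-rule bookkeeping is routine.
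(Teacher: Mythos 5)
Your proposal is correct and follows essentially the same route as the paper: both eliminate $\bu^{\ast}$ from the $\bu$-block optimality conditions --- you by computing the normal cone to $\BBB$ explicitly, the paper by Fenchel conjugacy of $\delta_{\BBB}$ --- arriving at the identical conclusion $\bu_{ij}^{\ast} \in \partial\norm{\cdot}\left(\bx_{i}^{\ast} - \bx_{j}^{\ast}\right)$, and then substitute into \eqref{OptX}. Your extra care about the direction of the subdifferential sum rule at coincident sensors addresses a point the paper's one-line conclusion glosses over, and it is resolved exactly along the lines you sketch: the objective of \eqref{Problem} is a smooth quadratic minus a finite convex sum of norms, so the Clarke subdifferential splits exactly and your per-edge selection certifies $\bo_{nN} \in \partial f\left(\bx^{\ast}\right)$.
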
	
	In the forthcoming section we develop a simple solution scheme for solving Problem \eqref{ReformulationCompact}, which globally converges to a pair that satisfies the premises of Proposition \eqref{P:CriticalCoincide} and therefore finds a critical point of the original WSN problem. To the best of our knowledge, this is the first algorithm that is guaranteed to converge to a critical point of the original non-smooth and non-convex problem (in the papers mentioned in Section \ref{SSec:First} only \textit{subsequences convergence} to critical/KKT points of the \textit{reformulated problem} is proven).
	 	 
\section{Centralized and Distributed Algorithms} \label{Sec:Algorithms}
	The two blocks structure of Problem \eqref{ReformulationCompact} (in addition to the fact that there is no coupling constraint between these blocks) immediately suggests the application of optimization methods that employ the concept of Alternating Minimization (AM), which is a very useful technique to tackle complex convex and non-convex problems. The main reason for that is the ability to exploit the following nice feature of Problem \eqref{ReformulationCompact}: each sub-problem with respect to one block of variables (while the other block remains fixed) can be easily solved.

	In the context of Problem \eqref{ReformulationCompact}, a basic AM based algorithm will have the following form
\smallskip

	\begin{center}
		\fcolorbox{black}{white}{\parbox{16cm}{{\bf A Centralized Alternating Minimization for WSN} \\
			{\bf Initialization.} $\bu^{0} \in \real^{nM}$ \\
	  		{\bf General Step.} For $k \in \nn$,
				\begin{itemize}
					\item[1.] {\bf Sensors update.} 
						\begin{align} \label{AM:Stepx}
							\bx^{k + 1} = \argmin_{\bx \in \real^{nN}} F\left(\bx , \bu^{k}\right) = \bpp^{-1}\left(\bww\bu + \baa^{T}\bbb\ba\right)^{T}.
						\end{align}
						\vspace{-0.5in}
					\item[2.] {\bf Auxiliary update.} For all $\left(i , j\right) \in \EEE$ we denote
						\begin{equation*}
							\bv_{ij}^{k + 1} = 
							\begin{cases}
								\bx_{i}^{k + 1} - \bx_{j}^{k + 1}, & \left(i , j\right) \in \EEE_{1}, \\
								\bx_{i}^{k + 1} - \ba_{j}, & \left(i , j\right) \in \EEE_{2}.
							\end{cases}
						\end{equation*}					
						Then
						\begin{equation} \label{AM:Stepu}
						\begin{aligned}
							\bu_{ij}^{k + 1} & = 
							\begin{cases}
								\frac{\bv_{ij}^{k + 1}}{\norm{\bv_{ij}^{k + 1}}}, & \bv_{ij}^{k + 1} \neq \bo_{n}, \\
								\bo_{n}, & \text{otherwise}.
							\end{cases}\\							
							\bu_{ji}^{k + 1}&=-\bu_{ij}^{k + 1}
							\end{aligned}
						\end{equation}					
				\end{itemize}\vspace{-0.1in}}}
	\end{center}
\smallskip

	The sub-problem in \eqref{AM:Stepx} is solved explicitly by solving the linear equation $\nabla_{\bx} F\left(\bx^{k + 1} , \bu^{k}\right) = \bo$ (in this respect see also Proposition \ref{P:BasicPro} below). Therefore, the obtained solution (see exact formula above) means that this algorithm is centralized in the sense that the location update rule \eqref{AM:Stepx} requires to perform the computation on a single processing unit. The problem of minimizing the function $F\left(\bx^{k + 1} , \bu\right)$ with respect to $\bu$ consists of separable minimization problems (see \eqref{ObjG}), each of minimizing a linear function over the unit ball, which results with the formula given in \eqref{AM:Stepu}. Before proceeding, we note that this algorithm generates a well-defined sequence (this is a property that the algorithms mentioned in the Introduction do not necessarily possess, more on that in Section \ref{Sec:Numerics}).
	
	A closer look on Problem \eqref{ReformulationCompact} reveals that the AM approach can be used in a more refined way to produce a fully distributed algorithm. More precisely, we would like to allow each sensor to update its own location estimate, and to this end, we will look at each sub-block $\bx_{i}$, $i \in \left\{ 1 , 2 ,\ldots , N \right\}$, as a separated block of variables and employ the AM approach on the $N + 1$ blocks: $\bx_{1} , \bx_{2} , \ldots , \bx_{N}$ and $\bu$, as explicitly recorded now.
\smallskip

	\begin{center}
		\fcolorbox{black}{white}{\parbox{16cm}{{\bf A Distributed Alternating Minimization for WSN} \\
			{\bf Initialization.} $\bx^{0} \in \real^{nN}$ and $\bu^{0} \in \real^{nM}$ \\
 	 		{\bf General Step.} For $k \in \nn$,
				\begin{itemize}
					\item[1.] {\bf Sensors update.} For all $i \in \left\{ 1 , 2 , \ldots , N \right\}$
						\begin{equation} \label{eq:update}
							\hspace{-0.25in} \bx_{i}^{k + 1} = \frac{1}{M_{i}}\left(\sum_{\underset{j < i}{j \in \NNN(i)} \cap \VVV} \hspace{-0.1in} \bx_{j}^{k + 1} + \hspace{-0.1in} \sum_{\underset{j > i}{j \in \NNN(i)} \cap \VVV} \hspace{-0.1in} \bx_{j}^{k} + \hspace{-0.1in} \sum_{j\in \NNN(i)} \hspace{-0.05in} d_{ij}\bu_{ij}^{k} + \hspace{-0.15in} \sum_{l \in \NNN(i) \cap \AAA} \hspace{-0.1in} \ba_{l} \right).
						\end{equation}
					\vspace{-0.3in}
					\item[2.] {\bf Auxiliary update.} The same as in \eqref{AM:Stepu}.
				\end{itemize}}}
	\end{center}
\smallskip

	The update rule of the sensors as given in \eqref{eq:update} easily follows from writing the optimality condition of minimizing the function $\bx_{i} \rightarrow F\left(\bx_{1}^{k + 1} , \ldots , \bx_{i - 1}^{k + 1} , \bx_{i} , \bx_{i + 1}^{k} , \ldots , \bx_{N}^{k}\right)$ (see \eqref{OptX} in this respect).
	
	Note that while this algorithm is fully distributed, it is serial in updating the blocks $\bx_{1} , \bx_{2} , \ldots , \bx_{N}$. However, parallelization can be achieved in the updating of the auxiliary variables $\bu_{ij}$, $\left(i , j\right) \in \EEE$. In the upcoming section we discuss how one can further parallelize this algorithm when updating the blocks $\bx_{1} , \bx_{2} , \ldots , \bx_{N}$.

\section{A Unifying AM Algorithm for WSN} \label{Sec:Unifying}
	The centralized and distributed algorithms presented above are both particular instances of our Unifying AM Algorithm, which is developed next. The idea is to divide the set of sensors into $q \in \left\{ 1 , 2 , \ldots , N \right\}$ disjoint clusters $\CCC_{1} , \CCC_{2} , \ldots , \CCC_{q}$ such that $\CCC_{1} \cup \CCC_{2} \cup \cdots \cup \CCC_{q} = \left\{ 1 , 2 , \ldots , N \right\}$, thus forming a partition of the set of sensors $\VVV$. It is easy to see that the fully centralized version can be recovered when the number of clusters is $q = 1$, while the fully distributed version is recovered when $q = N$.
	
	In order to derive an AM based algorithm that exploits the division of the block $\bx$ into clusters we will need to denote sub-vectors with respect to this division. For each cluster $\CCC_{i}$, $1 \leq i \leq q$, we denoted by ${\bar \bx}_{i}$ the vector which is constructed by concatenating the vectors $\bx_{j}$ for all $j \in \CCC_{i}$. Therefore, we can now apply the AM technique to Problem \eqref{ReformulationCompact}, with respect to the $q + 1$ blocks: ${\bar \bx}_{1} , {\bar \bx}_{2} , \ldots , {\bar \bx}_{q}$ and $\bu$. More precisely, the algorithm is recorded now in an abstract form (explicit formulas can be easily derived by writing the corresponding optimality conditions).
\smallskip

	\begin{center}
		\fcolorbox{black}{white}{\parbox{16cm}{{\bf A Unifying Alternating Minimization for WSN} \\
			{\bf Initialization.} $\bx^{0} \in \real^{nN}$ and $\bu^{0} \in \real^{nM}$ \\
	  		{\bf General Step.} For $k \in \nn$,
				\begin{itemize}
					\item[1.] {\bf Sensors update.} For all $i \in \left\{ 1 , 2 , \ldots , N \right\}$
						\vspace{-0.1in}	
						\begin{align*}
							{\bar \bx}_{1}^{k + 1} & = \argmin_{{\bar \bx}_{1}} F\left({\bar \bx}_{1} , {\bar \bx}_{2}^{k} , \ldots , {\bar \bx}_{q}^{k} , \bu^{k}\right), \\
							{\bar \bx}_{2}^{k + 1} & = \argmin_{{\bar \bx}_{2}} F\left({\bar \bx}_{1}^{k + 1} , {\bar \bx}_{2} , \ldots , {\bar \bx}_{q}^{k} , \bu^{k}\right), \\ 
					\vdots & \\
							{\bar \bx}_{q}^{k + 1} & = \argmin_{{\bar \bx}_{q}} F\left({\bar \bx}_{1}^{k + 1} , {\bar \bx}_{2}^{k + 1} , \ldots , {\bar \bx}_{q} , \bu^{k}\right).
						\end{align*}
					\vspace{-0.5in}	
					\item[2.] {\bf Auxiliary update.} The same as in \eqref{AM:Stepu}.
				\end{itemize}}}
	\end{center}
\smallskip

	Before analyzing the Unifying AM Algorithm presented above, we would like to provide another advantage of the idea of dividing the sensors into disjoint clusters. 
	
	There are many ways in which one may divide the sensors into clusters, each with its own benefits. The most naive approach would be \textit{geographical clustering}, in which several neighboring sensors are collected into one cluster. This division makes the algorithm more centralized, since the updates of each cluster take into account the interrelations between the sensors that comprise it. Alternatively, one may use \textit{colored clustering}, i.e., clusters that consist of non-neighboring sensors. In this clustering approach, each cluster update is equivalent to independently updating each of the sensors that comprise it, similar to their update in the fully distributed algorithm presented above (see step \eqref{eq:update}). However, in sharp contrast to the fully distributed update, the independent sensor updates in each colored cluster can be done in parallel rather than serially. Obtaining a colored clustering can be done efficiently via a distributed graph-coloring algorithm \cite{BE2013-B}. To summarize, colored clustering results in a fully distributed and partially parallel algorithm. We demonstrate the benefits of the colored clustering approach in Section \ref{Sec:Numerics}, and therefore it deserves a deeper study in a future work.

\section{Analysis of the Unifying AM Algorithm} \label{Sec:Analysis}

\subsection{Basic Properties}
	Recalling Problem \eqref{ReformulationMatrix}
	\begin{equation*}
		\underset{\begin{smallmatrix}\bx \in \real^{nN} \\ \bu \in \BBB^{M}\end{smallmatrix}}{\min} \left\{ \bx^{T}\bpp\bx - 2\left(\bww\bu + \baa^{T}\bbb\ba\right)^{T}\bx + 2\bs^{T}\bu \right\},
	\end{equation*}
	throughout the rest of the paper we will pose the following mild assumption on the networks that we can handle.
	\begin{assumption} \label{A:AssumptionA}
		\begin{itemize}
			\item[(i)] The graph obtained by the network is connected.
			\item[(ii)] There is at least one anchor sensor, that is, $m \geq 1$.
		\end{itemize}		
	\end{assumption}
	It should be noted that if Item (i) does not hold, namely, the network has more than one connected component, then it can be divided into disjoint connected sub-networks that can be treated separately, as long as each such sub-network contains an anchor node.

	Under this assumption on the network we can obtain the following basic properties of our problem (due to the technical nature of the proof, we postpone it to the Appendix below).
	\begin{prop} \label{P:BasicPro}
		\begin{itemize}
			\item[(i)] The matrix $\bpp$ is positive definite.
			\item[(ii)] The function $\bx \rightarrow F\left(\bx , \bu\right)$ is strongly convex for any fixed $\bu$.
			\item[(iii)] The function $F\left(\cdot , \cdot\right)$ is coercive, that is, $F\left(\bx , \bu\right) \rightarrow \infty$ as $\norm{\left(\bx , \bu\right)} \rightarrow \infty$.
			\item[(iv)] Problem \eqref{ReformulationCompact} attains its optimal solution.
		\end{itemize}
	\end{prop}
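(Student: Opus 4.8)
The plan is to prove the four items in the order stated, since each feeds the next: item~(i) drives item~(ii) directly, and together with the boundedness of the $\bu$-block it yields the coercivity in item~(iii), from which attainment in item~(iv) follows by a Weierstrass-type argument.

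For item~(i), since $\bpp = \tilde{\bpp} \otimes \bii_{n}$, its eigenvalues are exactly those of $\tilde{\bpp}$, each with multiplicity $n$; hence it suffices to prove that $\tilde{\bpp} = \tilde{\bqq}^{T}\tilde{\bqq} + \tilde{\baa}^{T}\tilde{\baa}$ is positive definite. This matrix is manifestly positive semidefinite, so I would show that $\mathbf{z}^{T}\tilde{\bpp}\mathbf{z} = \norm{\tilde{\bqq}\mathbf{z}}^{2} + \norm{\tilde{\baa}\mathbf{z}}^{2} = 0$ forces $\mathbf{z} = \bo$. Vanishing of the first term means $z_{i} = z_{j}$ along every edge of $\EEE_{1}$, so $\mathbf{z}$ is constant on each connected component of the subgraph $(\VVV, \EEE_{1})$; vanishing of the second means $z_{i} = 0$ for every sensor $i$ adjacent to an anchor. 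Invoking Assumption~\ref{A:AssumptionA}, for an arbitrary sensor $i$ I would choose a path in the connected graph from $i$ to the nearest anchor, so that its interior edges lie in $\EEE_{1}$ and its last edge lies in $\EEE_{2}$; propagating the common value of $\mathbf{z}$ along the $\EEE_{1}$-edges down to the final sensor before the anchor, which is pinned to $0$ by the $\EEE_{2}$-condition, gives $z_{i} = 0$. Since $i$ is arbitrary, $\mathbf{z} = \bo$. This path argument, linking $\EEE_{1}$-connectivity to an anchor through a single $\EEE_{2}$-edge, is the only real content, and is the step I expect to demand the most care.

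For item~(ii), for fixed $\bu$ the indicator terms in $F$ do not depend on $\bx$, so $\bx \mapsto F(\bx, \bu)$ equals the quadratic $G(\cdot, \bu)$ up to an additive constant; its Hessian in $\bx$ is $2\bpp \succ 0$ by item~(i), which gives strong convexity with modulus $2\lambda_{\min}(\bpp)$.

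For items~(iii) and~(iv), I would first observe that $F \equiv +\infty$ off the feasible set $\{\bu \in \BBB^{M}\}$, on which $\norm{\bu} \le \sqrt{M}$ is bounded; hence along any sequence with $\norm{(\bx, \bu)} \to \infty$ one may assume $\bu \in \BBB^{M}$ and $\norm{\bx} \to \infty$. Bounding $G$ below by $\lambda_{\min}(\bpp)\norm{\bx}^{2} - 2c_{1}\norm{\bx} - c_{2}$, with $c_{1}, c_{2}$ absorbing the linear terms over the bounded set of feasible $\bu$, the positive-definite quadratic term dominates and forces $F \to \infty$, which is item~(iii). Finally $F$ is proper and lower semicontinuous ($G$ is continuous and $\delta_{\BBB}$ is lower semicontinuous since $\BBB$ is closed), so by coercivity a nonempty sublevel set of $F$ is compact and $F$ attains its minimum there by Weierstrass, giving item~(iv). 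Once item~(i) is secured, items~(ii)--(iv) are essentially routine.
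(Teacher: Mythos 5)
Your proof is correct and, for items (i), (ii) and (iv), follows essentially the same route as the paper: positive semidefiniteness of $\tilde{\bpp}$ plus the observation that $\norm{\tilde{\bqq}\mathbf{z}}^{2}+\norm{\tilde{\baa}\mathbf{z}}^{2}=0$ forces $\mathbf{z}=\bo$ under Assumption \ref{A:AssumptionA}, the Kronecker-product eigenvalue argument, and a Weierstrass-type attainment result. The paper organizes the kernel argument slightly differently --- it first identifies $\Ker\tilde{\bqq}$ as $\Span\{\bone\}$ after reducing (``w.l.o.g.'') to the case where the $\EEE_{1}$-subgraph is connected, and then intersects with $\Ker\tilde{\baa}$ --- whereas your path-to-the-nearest-anchor propagation handles the multi-component case uniformly and is, if anything, the more careful of the two. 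The one place where you genuinely diverge is item (iii): the paper disposes of coercivity in a single line by citing a lemma to the effect that the quadratic $G$ is coercive and then using $F\geq G$; read literally as a statement about $G$ on all of $\real^{nN}\times\real^{nM}$ this is delicate, since $G$ is affine in $\bu$ and so is not jointly coercive without the ball constraint. Your argument --- restrict to the effective domain where $\norm{\bu}\leq\sqrt{M}$, then let the positive definite quadratic in $\bx$ dominate the linear terms uniformly over that bounded set --- is self-contained and supplies exactly the step the paper's citation glosses over. No gaps.
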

	
\subsection{Global Convergence}
	Our main goal in this part is to show that the Unifying AM Algorithm presented above globally converges to a critical point of the original Problem \eqref{Problem}. As discussed in Section \ref{Sec:Formulation}, we will first show that our algorithm finds critical points $\left(\bx^{\ast} , \bu^{\ast}\right)$ of Problem \eqref{ReformulationCompact} and then the desired conclusion will follow from Proposition \ref{P:CriticalCoincide}. To this end we will rely on a recent proof methodology of first-order methods in the non-convex setting, which was developed first in \cite{AB2009,ABS2013} and was extended and simplified in \cite{BST2014} (see \cite{BSTV2018} for a recent simple and concise summary of the methodology). The main mathematical tool that stands at the heart of this proof methodology, is the Kurdyka-{\L}ojasiewicz (KL) property \cite{L1963,K1998} (see \cite{BDL2006} for an extension to non-smooth functions). 
	
	The general result in \cite{BST2014} states that if an algorithm, which is designed to solve a specific optimization problem, generates a gradient-like descent sequence (in terms of \cite[Definition 6.1, p. 2147]{BSTV2018}), then it globally converges to a critical point of the problem. A recent modification, given in \cite{STV2018}, shows that if the algorithm generates a gradient-like descent sequence only with respect to a subset of the variables (see \cite[Definition 4.2, p. 661]{STV2018}), then global convergence can be shown with respect to that subset. In our case, we are interested in proving global convergence only with respect to the original variable $\bx$. Therefore, we focus on showing that the Unifying AM Algorithm for solving Problem \eqref{ReformulationCompact} generates a sequence $\Seq{\bx}{k}$ which is indeed a gradient-like descent sequence for minimizing the objective function $F$. Our main result is as follows.
	\begin{thm} \label{T:AbstrGlob}
		Let $\left\{ \left(\bx^{k} , \bu^{k}\right) \right\}_{k \in \nn}$ be a sequence generated by the Unifying AM Algorithm. Then, the sequence $\Seq{\bx}{k}$ has a finite length, i.e., $\sum_{k = 1}^{\infty} \norm{\bx^{k + 1} - \bx^{k}} < \infty$, and it converges to some $\bx^{\ast}$, which is a critical point of Problem \eqref{Problem}.
	\end{thm}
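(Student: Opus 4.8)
The plan is to cast $\Seq{\bx}{k}$ as a \emph{gradient-like descent sequence with respect to the subset of variables $\bx$} in the sense of \cite[Definition 4.2]{STV2018}, and then to invoke the abstract convergence theorem of that framework (a refinement of \cite{BST2014}, summarized in \cite{BSTV2018}) to obtain finite length of $\Seq{\bx}{k}$ and its convergence to a point $\bx^{\ast}$ with $\left(\bx^{\ast},\bu^{\ast}\right)$ critical for Problem \eqref{ReformulationCompact}; Proposition \ref{P:CriticalCoincide} then upgrades $\bx^{\ast}$ to a critical point of the original Problem \eqref{Problem}. Three ingredients must be verified: a sufficient-decrease estimate measured in the $\bx$-increments, a relative-error (subgradient) bound, and the Kurdyka--\L ojasiewicz property of $F$. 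Boundedness of the iterates comes for free: both updates are (block) minimizations, so the sequence $\{F(\bx^{k},\bu^{k})\}_{k\in\nn}$ is non-increasing, hence $\left\{\left(\bx^{k},\bu^{k}\right)\right\}$ stays in a level set of $F$, which is bounded by the coercivity in Proposition \ref{P:BasicPro}(iii); moreover $\bu^{k}\in\BBB^{M}$ for every $k\geq 1$.

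For \emph{sufficient decrease}, Proposition \ref{P:BasicPro}(ii) gives that $\bx\mapsto F\left(\bx,\bu\right)$ is $\sigma$-strongly convex with $\sigma=2\lambda_{\min}(\bpp)>0$ by Proposition \ref{P:BasicPro}(i), and this strong convexity is inherited by the restriction to each cluster block ${\bar\bx}_{i}$. Since each ${\bar\bx}_{i}^{k+1}$ is the exact minimizer of a $\sigma$-strongly convex function, the standard estimate for exact minimization yields a per-block decrease of at least $\tfrac{\sigma}{2}\norm{{\bar\bx}_{i}^{k+1}-{\bar\bx}_{i}^{k}}^{2}$; summing telescopically over $i=1,\dots,q$ and using that the auxiliary step \eqref{AM:Stepu} is itself an exact minimization over $\bu$ (hence non-increasing), I get $F\left(\bx^{k+1},\bu^{k+1}\right)\leq F\left(\bx^{k},\bu^{k}\right)-\tfrac{\sigma}{2}\norm{\bx^{k+1}-\bx^{k}}^{2}$. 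As $F$ is bounded below, summing over $k$ gives $\sum_{k}\norm{\bx^{k+1}-\bx^{k}}^{2}<\infty$, so in particular $\bx^{k+1}-\bx^{k}\to\bo$.

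For the \emph{relative error}, since $\bu^{k+1}$ exactly minimizes the convex function $\bu\mapsto F\left(\bx^{k+1},\bu\right)$, I may take $\bo$ in the $\bu$-block of the subgradient, so that $\left(\nabla_{\bx}F\left(\bx^{k+1},\bu^{k+1}\right),\bo\right)\in\partial F\left(\bx^{k+1},\bu^{k+1}\right)$. Because $G$ is quadratic, $\nabla_{\bx}F\left(\bx,\bu\right)=2\bpp\bx-2\left(\bww\bu+\baa^{T}\bbb\ba\right)$ is affine, and the block optimality conditions satisfied by $\bx^{k+1}$ (written with the mixed old/new arguments $\bu^{k}$ and the later blocks ${\bar\bx}_{i+1}^{k},\dots$) let me express $\nabla_{\bx}F\left(\bx^{k+1},\bu^{k+1}\right)$ purely through the increments, giving $\norm{\nabla_{\bx}F\left(\bx^{k+1},\bu^{k+1}\right)}\leq c_{1}\norm{\bx^{k+1}-\bx^{k}}+c_{2}\norm{\bu^{k+1}-\bu^{k}}$ with $c_{1},c_{2}$ depending only on $\norm{\bpp}$ and $\norm{\bww}$. \emph{This is the crux of the proof}: to match the subset relative-error condition the term $\norm{\bu^{k+1}-\bu^{k}}$ must also be controlled by the $\bx$-increments, yet the auxiliary map \eqref{AM:Stepu} is the normalization $\bv\mapsto\bv/\norm{\bv}$, which is not globally Lipschitz and degenerates as some $\bv_{ij}^{k}=\bx_{i}^{k}-\bx_{j}^{k}$ (or $\bx_{i}^{k}-\ba_{j}$) approaches $\bo$. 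The main work is therefore to establish $\norm{\bu^{k+1}-\bu^{k}}\leq c_{3}\norm{\bx^{k+1}-\bx^{k}}$, which I would obtain by showing that along the generated sequence the relevant distances $\norm{\bv_{ij}^{k}}$ stay bounded away from zero for all large $k$, so that the normalization is locally Lipschitz there, while the genuinely degenerate edges where $\bv_{ij}^{k}\to\bo$ are treated separately through the convention $\bu_{ij}=\bo$ adopted in \eqref{AM:Stepu}, so that they contribute nothing to the gradient.

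Finally, the function $F$ in \eqref{ReformulationCompact} is semi-algebraic: $G$ is a polynomial and each $\delta_{\BBB}$ is the indicator of a ball, hence $F$ is a KL function. With the sufficient-decrease and relative-error estimates in hand, $\Seq{\bx}{k}$ is a gradient-like descent sequence for $F$ relative to $\bx$, so the abstract theorem of \cite{STV2018} applies on the bounded region containing the iterates: $\Seq{\bx}{k}$ has finite length, $\sum_{k}\norm{\bx^{k+1}-\bx^{k}}<\infty$, and converges to some $\bx^{\ast}$ such that $\left(\bx^{\ast},\bu^{\ast}\right)$ is a critical point of \eqref{ReformulationCompact}, where $\bu^{\ast}$ is a limit point of the bounded sequence $\Seq{\bu}{k}$. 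Proposition \ref{P:CriticalCoincide} then yields that $\bx^{\ast}$ is a critical point of Problem \eqref{Problem}, which completes the proof.
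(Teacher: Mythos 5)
Your overall architecture matches the paper's: sufficient decrease measured in the $\bx$-increments, boundedness from coercivity, semi-algebraicity of $F$, and an appeal to the subset-variable convergence theorem of \cite{STV2018} followed by Proposition \ref{P:CriticalCoincide}. The sufficient-decrease and boundedness parts are essentially identical to Lemmas \ref{L:Sufficient} and \ref{L:Bounded}. However, your relative-error step contains a genuine gap, and you have correctly identified where it is: you evaluate the subgradient at the pair $\left(\bx^{k+1},\bu^{k+1}\right)$, which forces you to bound $\norm{\bu^{k+1}-\bu^{k}}$ by $c_{3}\norm{\bx^{k+1}-\bx^{k}}$. This estimate is not established in your argument, and it cannot be established in general: the auxiliary update is the normalization $\bv\mapsto\bv/\norm{\bv}$, whose local Lipschitz constant blows up as $\norm{\bv}\to 0$, and nothing in the algorithm or the assumptions prevents $\bv_{ij}^{k}=\bx_{i}^{k}-\bx_{j}^{k}$ from becoming arbitrarily small along the iterates (two non-anchor estimates may collapse toward each other, and the convention $\bu_{ij}=\bo_{n}$ at $\bv_{ij}=\bo_{n}$ makes the update map discontinuous there, so treating the degenerate edges ``separately'' does not repair the bound). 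Your proof therefore rests on an unproved and, in general, false intermediate claim.

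The paper sidesteps this obstruction entirely by a re-indexing device: it sets $\bw^{k+1}=\bu^{k}$ and verifies the descent and relative-error conditions for the sequence $\left\{\left(\bx^{k},\bw^{k}\right)\right\}_{k\in\nn}$, i.e., the subgradient in Lemma \ref{L:Bound} is taken at the point $\left(\bx^{k+1},\bu^{k}\right)$ rather than at $\left(\bx^{k+1},\bu^{k+1}\right)$. At that point the $\bu$-block optimality condition $\bo_{nM}\in\nabla_{\bu}G\left(\bx^{k},\bu^{k}\right)+\partial\delta_{\BBB^{M}}\left(\bu^{k}\right)$ from the \emph{previous} auxiliary step supplies the $\bu$-component of the subgradient, and the resulting vector $\by^{k+1}=\nabla G\left(\bx^{k+1},\bu^{k}\right)-\left(\bo_{nN},\nabla_{\bu}G\left(\bx^{k},\bu^{k}\right)\right)$ is controlled solely by $\norm{\bx^{k+1}-\bx^{k}}$, using only that $G$ is quadratic in $\bx$ and linear in $\bu$. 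No Lipschitz property of the normalization map, and no bound on $\norm{\bu^{k+1}-\bu^{k}}$, is ever needed. To complete your proof you should adopt this shift (or an equivalent choice of evaluation point), and also verify the continuity condition $\limsup_{n_{k}\to\infty}F\left(\bx^{n_{k}},\bw^{n_{k}}\right)\leq F\left(\bx^{\ast},\bw^{\ast}\right)$ required by \cite[Definition 4.2]{STV2018}, which the paper obtains from $F=G$ along the iterates and the continuity of $G$; your proposal omits this check.
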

	In order to prove this result, we first present three auxiliary lemmas. For the sake of proving the lemmas, we need to recall that the vector $\bx$ is split into $q$ blocks according to the clusters $\CCC_{1} , \CCC_{2} , \ldots , \CCC_{q}$, that is, $\bx = \left({\bar \bx}_{1} , {\bar \bx}_{2} , \ldots , {\bar \bx}_{q}\right)$. During the updating process of the Unifying AM Algorithm, we will need the following notations. Fix an iteration $k \in \nn$ and a block index $1 \leq i \leq q$, 
	\begin{equation*}
		\bx^{k , i} = \left({\bar \bx}_{1}^{k + 1} , \ldots , {\bar \bx}_{i - 1}^{k + 1} , {\bar \bx}_{i}^{k + 1} , {\bar \bx}_{i + 1}^{k} , \ldots , {\bar \bx}_{q}^{k}\right).
	\end{equation*}
	Therefore, we obviously have that $\bx^{k , 0} = \bx^{k}$ and $\bx^{k , q} = \bx^{k + 1}$. We will also need to specify the objective function $G$ (defined in \eqref{ObjG}) with respect to each block ${\bar \bx}_{i}$, $1 \leq i \leq q$, separately. We define
	\begin{equation*}
		F_{i}^{k}\left(\bxi\right) := F\left({\bar \bx}_{1}^{k + 1} , \ldots , {\bar \bx}_{i - 1}^{k + 1} , \bxi , {\bar \bx}_{i + 1}^{k} , \ldots , {\bar \bx}_{q}^{k} , \bu^{k}\right).
	\end{equation*}
	It is easy to see that $F_{i}^{k}\left(\bxi\right)$ is a quadratic function in $\bxi$, which is strongly convex (similar proof as in Proposition \ref{P:BasicPro}(ii)). Lastly, for convenience of the proofs, and to correspond with the scheme in \cite{STV2018}, we denote from now on $\bw^{k + 1} = \bu^{k}$, for all $k \in \nn$. 
	
	We begin with the first lemma that show a sufficient decrease property of the sequence of function values. 
	\begin{lemma} \label{L:Sufficient}
		Let $\left\{ \left(\bx^{k} , \bu^{k}\right) \right\}_{k \in \nn}$ be a sequence generated by the Unifying AM Algorithm. Then, there exists $\rho_{1} > 0$ such that for all $k \in \nn$, we have
		\begin{equation*}
			\rho_{1}\norm{\bx^{k + 1} - \bx^{k}}^{2} \leq F\left(\bx^{k} , \bw^{k}\right) - F\left(\bx^{k + 1} , \bw^{k + 1}\right).
		\end{equation*}
	\end{lemma}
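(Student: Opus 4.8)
The plan is to establish the sufficient decrease inequality by accumulating the per-block decrease across the $q$ sensor blocks and the auxiliary block, exploiting the strong convexity of each partial minimization subproblem. Recall that $\bw^{k+1} = \bu^{k}$, so the claimed inequality compares $F(\bx^{k}, \bu^{k-1})$ with $F(\bx^{k+1}, \bu^{k})$. I would split this total decrease into two parts: the decrease over the $\bx$-blocks (at fixed $\bu^{k}$) and the decrease over the $\bu$-block (at fixed $\bx^{k+1}$). Since step~2 of the algorithm minimizes $F(\bx^{k+1}, \cdot)$ exactly over the unit balls, the auxiliary update can only decrease the objective, so that contribution is non-negative and may simply be discarded. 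The real work is to lower-bound the decrease coming from the sensor updates in terms of $\norm{\bx^{k+1} - \bx^{k}}^{2}$.

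\textbf{Key steps.} First I would use the telescoping identity along the intermediate points $\bx^{k,0} = \bx^{k}, \bx^{k,1}, \ldots, \bx^{k,q} = \bx^{k+1}$, writing
\begin{equation*}
	F\bigl(\bx^{k}, \bu^{k}\bigr) - F\bigl(\bx^{k+1}, \bu^{k}\bigr) = \sum_{i=1}^{q} \Bigl[ F\bigl(\bx^{k,i-1}, \bu^{k}\bigr) - F\bigl(\bx^{k,i}, \bu^{k}\bigr) \Bigr].
\end{equation*}
Each summand is exactly $F_{i}^{k}\bigl({\bar \bx}_{i}^{k}\bigr) - F_{i}^{k}\bigl({\bar \bx}_{i}^{k+1}\bigr)$, where ${\bar \bx}_{i}^{k+1}$ is the global minimizer of the strongly convex quadratic $F_{i}^{k}$. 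The standard strong convexity estimate for a minimizer then gives, for some modulus $\sigma_{i} > 0$,
\begin{equation*}
	F_{i}^{k}\bigl({\bar \bx}_{i}^{k}\bigr) - F_{i}^{k}\bigl({\bar \bx}_{i}^{k+1}\bigr) \geq \frac{\sigma_{i}}{2} \norm{{\bar \bx}_{i}^{k+1} - {\bar \bx}_{i}^{k}}^{2}.
\end{equation*}
Next I would set $\rho_{1} = \tfrac{1}{2}\min_{i} \sigma_{i} > 0$ and sum, using $\norm{\bx^{k+1} - \bx^{k}}^{2} = \sum_{i=1}^{q} \norm{{\bar \bx}_{i}^{k+1} - {\bar \bx}_{i}^{k}}^{2}$, to obtain $\rho_{1}\norm{\bx^{k+1} - \bx^{k}}^{2} \leq F(\bx^{k}, \bu^{k}) - F(\bx^{k+1}, \bu^{k})$. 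Finally, since the auxiliary update is an exact minimization over $\bu$, we have $F(\bx^{k+1}, \bu^{k}) \geq F(\bx^{k+1}, \bu^{k+1}) = F(\bx^{k+1}, \bw^{k+1})$, and $F(\bx^{k}, \bu^{k}) = F(\bx^{k}, \bw^{k+1})$ does not quite match the indices in the statement. I would reconcile this by reindexing the auxiliary decrease one step earlier, i.e.\ noting $\bw^{k} = \bu^{k-1}$ and that the $\bu$-minimization at iteration $k-1$ already guarantees $F(\bx^{k}, \bu^{k-1}) \geq F(\bx^{k}, \bu^{k})$; chaining these gives the stated form with $\bw^{k}, \bw^{k+1}$.

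\textbf{The main obstacle} is getting the index bookkeeping between $\bu^{k}$ and $\bw^{k+1} = \bu^{k}$ exactly right, since the lemma is stated with the shifted variable $\bw$ rather than $\bu$. The substantive inequality is the per-block strong convexity bound, which is immediate once I invoke the uniform strong convexity modulus: I must verify that $\sigma_{i}$ can be chosen independent of $k$, which holds because $F_{i}^{k}$ is always the same quadratic form in ${\bar \bx}_{i}$ (its Hessian is a fixed principal submatrix of $2\bpp$, independent of the iteration) by Proposition~\ref{P:BasicPro}(ii). Thus a single $\rho_{1} > 0$ works for all $k$, and the only care needed is aligning the auxiliary-variable indices to produce precisely the $\bw^{k} \to \bw^{k+1}$ form in the statement.
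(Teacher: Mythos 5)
Your proposal is correct and follows essentially the same route as the paper's proof: telescoping the per-block strong-convexity decrease over the $q$ sensor blocks (with $\nabla F_{i}^{k}\left({\bar \bx}_{i}^{k+1}\right) = \bo$ at each exact minimizer), taking $\rho_{1} = \tfrac{1}{2}\min_{i}\sigma_{i}$, and supplying the link $F\left(\bx^{k} , \bu^{k-1}\right) \geq F\left(\bx^{k} , \bu^{k}\right)$ from the auxiliary update of the \emph{previous} iteration to account for the shift $\bw^{k+1} = \bu^{k}$. Your added observation that each $\sigma_{i}$ can be chosen independently of $k$, because the Hessian of $F_{i}^{k}$ is a fixed principal submatrix of $2\bpp$, is a detail the paper leaves implicit but which is needed for a single $\rho_{1}$ to work for all $k$.
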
	
	\begin{proof}
		Fix $k \in \nn$. Since each $F_{i}^{k}\left(\bxi\right)$, $1 \leq i \leq q$, is $\sigma_{i}$-strongly convex (for some $\sigma_{i} > 0$), we obtain from its definition that
		\begin{align*}
			F_{i}^{k}\left({\bar \bx}_{i}^{k}\right) & \geq F_{i}^{k}\left({\bar \bx}_{i}^{k + 1}\right) + \act{\nabla F_{i}^{k}\left({\bar \bx}_{i}^{k + 1}\right) , {\bar \bx}_{i}^{k} - {\bar \bx}_{i}^{k + 1}} + \frac{\sigma_{i}}{2}\norm{{\bar \bx}_{i}^{k} - {\bar \bx}_{i}^{k + 1}}^{2} = F_{i}^{k}\left({\bar \bx}_{i}^{k + 1}\right) + \frac{\sigma_{i}}{2}\norm{{\bar \bx}_{i}^{k} - {\bar \bx}_{i}^{k + 1}}^{2},
		\end{align*}
		where the last equality follows from the fact that $\nabla F_{i}^{k}\left({\bar \bx}_{i}^{k + 1}\right) = \bo$, which follows from the optimality of ${\bar \bx}_{i}^{k + 1}$ with respect to $F_{i}^{k}$ according to the update rule of the Unifying AM Algorithm. In addition, using our compact notations, we obviously have that $F_{i}^{k}\left({\bar \bx}_{i}^{k}\right) = F\left(\bx^{k , i - 1} , \bu^{k}\right)$ and $F_{i}^{k}\left({\bar \bx}_{i}^{k + 1}\right) = F\left(\bx^{k , i} , \bu^{k}\right)$.
		Therefore, for all $1 \leq i \leq q$ we have
		\begin{equation*}
			F\left(\bx^{k , i - 1} , \bu^{k}\right) \geq F\left(\bx^{k , i} , \bu^{k}\right) + \frac{\sigma_{i}}{2}\norm{{\bar \bx}_{i}^{k} - {\bar \bx}_{i}^{k + 1}}^{2}.
		\end{equation*}
		Summing this inequality for all $1 \leq i \leq q$, yields
		\begin{align*}
			F\left(\bx^{k} , \bu^{k}\right) & = F\left(\bx^{k , 0} , \bu^{k}\right) \geq F\left(\bx^{k , q} , \bu^{k}\right) + \sum_{i = 1}^{q} \frac{\sigma_{i}}{2}\norm{{\bar \bx}_{i}^{k} - {\bar \bx}_{i}^{k + 1}}^{2} \geq F\left(\bx^{k + 1} , \bu^{k}\right) + \frac{\sigma}{2}\norm{\bx^{k} - \bx^{k + 1}}^{2},
		\end{align*}
		where $\sigma = \min \left\{ \sigma_{1} , \sigma_{2} , \ldots , \sigma_{q} \right\}$. In addition, from the updating rule of the $\bu$-block we have that
		\begin{equation*}
			F\left(\bx^{k} , \bu^{k - 1}\right) \geq F\left(\bx^{k} , \bu^{k}\right).
		\end{equation*}
		Combining the last two inequalities and using the fact that $\bw^{k + 1} = \bu^{k}$, $k \in \nn$, yields the desired result.
	\end{proof}
	An immediate consequence of the first lemma is that the Unifying AM Algorithm generates a bounded sequence.
	\begin{lemma} \label{L:Bounded}
		Let $\left\{ \left(\bx^{k} , \bu^{k}\right) \right\}_{k \in \nn}$ be a sequence generated by the Unifying AM Algorithm. Then, the sequence is bounded.
	\end{lemma}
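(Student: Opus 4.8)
The plan is to combine the sufficient decrease property just established in Lemma \ref{L:Sufficient} with the coercivity of $F$ from Proposition \ref{P:BasicPro}(iii). The key observation is that Lemma \ref{L:Sufficient} does more than control the step sizes: since $\rho_{1} > 0$ and $\norm{\bx^{k + 1} - \bx^{k}}^{2} \geq 0$, it immediately yields $F\left(\bx^{k + 1} , \bw^{k + 1}\right) \leq F\left(\bx^{k} , \bw^{k}\right)$, so the real-valued sequence $\Seq{F(\bx}{k}, \bw^{k})$ is non-increasing. Consequently it is bounded above by its first element, say $L := F\left(\bx^{1} , \bw^{1}\right)$, and hence $F\left(\bx^{k} , \bw^{k}\right) \leq L$ for every $k \in \nn$. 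Note that this quantity is finite throughout, since the auxiliary update \eqref{AM:Stepu} always returns $\bu_{ij}^{k} \in \BBB$, so that $\bw^{k + 1} = \bu^{k} \in \BBB^{M}$ and the indicator terms in $F$ vanish.

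The second step is to convert this uniform upper bound into boundedness of the iterates via coercivity. Each pair $\left(\bx^{k} , \bw^{k}\right)$ lies in the sublevel set $\left\{ \left(\bx , \bu\right) : F\left(\bx , \bu\right) \leq L \right\}$. Because $F$ is coercive by Proposition \ref{P:BasicPro}(iii), every sublevel set at a finite level is bounded, and therefore the sequence $\Seq{\left(\bx}{k} , \bw^{k}\right)$ is bounded.

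Finally, recalling the convention $\bw^{k + 1} = \bu^{k}$ introduced before the lemmas, boundedness of $\Seq{\left(\bx}{k} , \bw^{k}\right)$ delivers at once the boundedness of both $\Seq{\bx}{k}$ and $\Seq{\bu}{k}$, the latter being exactly the shifted sequence $\left\{ \bw^{k + 1} \right\}_{k \in \nn}$. As an independent sanity check, boundedness of $\Seq{\bu}{k}$ can also be read off directly from \eqref{AM:Stepu}, which always produces vectors $\bu_{ij}^{k}$ of norm either $1$ or $0$.

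I do not anticipate a genuine obstacle here: the statement is essentially an immediate corollary of the preceding lemma. The only point requiring care is the bookkeeping with the shifted variable $\bw^{k} = \bu^{k - 1}$, making sure that the monotone quantity is $F\left(\bx^{k} , \bw^{k}\right)$ — the one whose sublevel sets are fed into the coercivity argument — rather than some mismatched pairing of the $\bx$- and $\bu$-iterates.
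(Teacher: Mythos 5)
Your proof is correct and follows essentially the same route as the paper: Lemma \ref{L:Sufficient} gives monotone decrease of $\left\{ F\left(\bx^{k} , \bw^{k}\right) \right\}_{k \geq 1}$, so the iterates lie in the sublevel set at level $F\left(\bx^{1} , \bw^{1}\right)$, and coercivity from Proposition \ref{P:BasicPro}(iii) then yields boundedness. Your additional remarks on the finiteness of the function values and the direct boundedness of $\Seq{\bu}{k}$ from \eqref{AM:Stepu} are sound but not needed beyond what the paper records.
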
		
	\begin{proof}
		From Lemma \ref{L:Sufficient} that the sequence $\left\{ F\left(\bx^{k} , \bw^{k}\right) \right\}_{k \geq 1}$ decreases and therefore the sequence $\left\{ \left(\bx^{k} , \bw^{k}\right) \right\}_{k \geq 1}$ belongs to the level set of the function $F$ at the level $F\left(\bx^{1} , \bw^{1}\right)$. Using Proposition \ref{P:BasicPro}(iii), we know that $F$ is coercive and thus has bounded level sets \cite[Proposition 11.12, p. 158]{BC2017-B}, which completes the proof.
	\end{proof}
	The last lemma shows that at each iteration of the Unifying AM Algorithm, there exists a subgradient of the objective function $F$ that is bounded by the norm of the difference between two corresponding iterates of the $\bx$ block.
	\begin{lemma} \label{L:Bound}
		Let $\left\{ \left(\bx^{k} , \bu^{k}\right) \right\}_{k \in \nn}$ be a sequence generated by the Unifying AM Algorithm. Then, there exist a scalar $\rho_{2} > 0$ and a vector $\by^{k + 1} \in \partial F\left(\bx^{k + 1} , \bw^{k + 1}\right)$ for all $k \in \nn$, such that 
		\begin{equation*}
			\norm{\by^{k + 1}} \leq \rho_{2}\norm{\bx^{k + 1} - \bx^{k}}.
		\end{equation*}
	\end{lemma}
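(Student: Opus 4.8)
The plan is to construct the required subgradient $\by^{k+1}$ componentwise, exploiting the product structure $\partial F\left(\bx , \bu\right) = \left\{ \nabla_{\bx} F\left(\bx , \bu\right) \right\} \times \left\{ \partial_{\bu} F\left(\bx , \bu\right) \right\}$ recorded in Section \ref{Sec:Formulation}. Recall that $\bw^{k+1} = \bu^{k}$, so I must bound an element of $\partial F\left(\bx^{k+1} , \bu^{k}\right)$. I will set $\by^{k+1} = \left(\nabla_{\bx} F\left(\bx^{k+1} , \bu^{k}\right) , \by_{\bu}^{k+1}\right)$, with $\by_{\bu}^{k+1} \in \partial_{\bu} F\left(\bx^{k+1} , \bu^{k}\right)$ chosen below, and bound each block separately by a multiple of $\norm{\bx^{k+1} - \bx^{k}}$. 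The key structural fact used throughout is that $G$ is quadratic, so $\nabla_{\bx} G\left(\cdot , \bu\right) = 2\bpp\left(\cdot\right) - 2\left(\bww\bu + \baa^{T}\bbb\ba\right)$ and $\nabla_{\bu} G\left(\bx , \cdot\right) = -2\bww^{T}\bx + 2\bs$ are both affine: the first in $\bx$ with constant Jacobian $2\bpp$, the second in $\bx$ with constant Jacobian $-2\bww^{T}$.

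For the $\bx$-block, the subtlety is that the optimality conditions of the Unifying AM Algorithm give $\nabla F_{i}^{k}\left({\bar \bx}_{i}^{k+1}\right) = \nabla_{{\bar \bx}_{i}} F\left(\bx^{k,i} , \bu^{k}\right) = \bo$ only at the intermediate point $\bx^{k,i}$, and not at the final iterate $\bx^{k+1} = \bx^{k,q}$. First I would use the affine form of $\nabla_{\bx} F\left(\cdot , \bu^{k}\right)$ to write, for each block $1 \leq i \leq q$,
\begin{equation*}
	\nabla_{{\bar \bx}_{i}} F\left(\bx^{k+1} , \bu^{k}\right) = \nabla_{{\bar \bx}_{i}} F\left(\bx^{k+1} , \bu^{k}\right) - \nabla_{{\bar \bx}_{i}} F\left(\bx^{k,i} , \bu^{k}\right) = \left[ 2\bpp\left(\bx^{k+1} - \bx^{k,i}\right) \right]_{i}.
\end{equation*}
Since $\bx^{k+1}$ and $\bx^{k,i}$ share the first $i$ blocks and differ only in blocks $i+1 , \ldots , q$, we have $\norm{\bx^{k+1} - \bx^{k,i}} \leq \norm{\bx^{k+1} - \bx^{k}}$, hence $\norm{\nabla_{{\bar \bx}_{i}} F\left(\bx^{k+1} , \bu^{k}\right)} \leq 2\norm{\bpp}\,\norm{\bx^{k+1} - \bx^{k}}$. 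Stacking these $q$ block bounds yields $\norm{\nabla_{\bx} F\left(\bx^{k+1} , \bu^{k}\right)} \leq 2\sqrt{q}\,\norm{\bpp}\,\norm{\bx^{k+1} - \bx^{k}}$.

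For the $\bu$-block, I would transport the optimality of the previous auxiliary update. Since $\bu^{k} = \argmin_{\bu} F\left(\bx^{k} , \bu\right)$, there exists $\bxi \in \partial\left(\sum_{\left(i,j\right) \in \EEE} \delta_{\BBB}\right)\left(\bu^{k}\right)$ with $\nabla_{\bu} G\left(\bx^{k} , \bu^{k}\right) + \bxi = \bo$. Because the indicator part of $F$ does not depend on $\bx$, the same $\bxi$ lies in the $\bu$-subdifferential at $\left(\bx^{k+1} , \bu^{k}\right)$, so I may take $\by_{\bu}^{k+1} := \nabla_{\bu} G\left(\bx^{k+1} , \bu^{k}\right) + \bxi \in \partial_{\bu} F\left(\bx^{k+1} , \bu^{k}\right)$. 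Then $\by_{\bu}^{k+1} = \nabla_{\bu} G\left(\bx^{k+1} , \bu^{k}\right) - \nabla_{\bu} G\left(\bx^{k} , \bu^{k}\right) = -2\bww^{T}\left(\bx^{k+1} - \bx^{k}\right)$, whence $\norm{\by_{\bu}^{k+1}} \leq 2\norm{\bww}\,\norm{\bx^{k+1} - \bx^{k}}$.

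Combining the two blocks, $\by^{k+1} = \left(\nabla_{\bx} F\left(\bx^{k+1} , \bu^{k}\right) , \by_{\bu}^{k+1}\right)$ belongs to $\partial F\left(\bx^{k+1} , \bw^{k+1}\right)$ and satisfies $\norm{\by^{k+1}} \leq \rho_{2}\norm{\bx^{k+1} - \bx^{k}}$ with, for example, $\rho_{2} = 2\sqrt{q\norm{\bpp}^{2} + \norm{\bww}^{2}}$. The main obstacle is the $\bx$-block: the per-block minimizations zero out the gradient only at the stale intermediate points $\bx^{k,i}$, so the essential idea is to use the constant Jacobian of the affine map $\nabla_{\bx} F\left(\cdot , \bu^{k}\right)$ to transport each block gradient from $\bx^{k,i}$ to $\bx^{k+1}$ at a cost controlled by $\norm{\bx^{k+1} - \bx^{k}}$; the $\bu$-block, by contrast, is routine once one reuses the indicator subgradient from the prior auxiliary update.
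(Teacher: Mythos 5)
Your proposal is correct and follows essentially the same route as the paper's proof: the subgradient is built by reusing $-\nabla_{\bu} G\left(\bx^{k} , \bu^{k}\right) \in \partial \delta_{\BBB^{M}}\left(\bu^{k}\right)$ from the auxiliary update, and the $\bx$-block is bounded by transporting each block gradient from the stale point $\bx^{k , i}$ (where it vanishes by block optimality) to $\bx^{k + 1}$ via the Lipschitz continuity of the quadratic's gradient. The only difference is cosmetic: you compute the explicit constants $2\sqrt{q}\norm{\bpp}$ and $2\norm{\bww}$ from the matrix form of $G$, whereas the paper keeps abstract Lipschitz constants $\alpha_{i}$ and $\beta$ and sums the block norms instead of stacking them.
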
	
	\begin{proof}	
		Let $k \in \nn$. Since $\bw^{k + 1} = \bu^{k}$, by the definition of $F\left(\cdot , \cdot\right)$ (see \eqref{ObjF}) we get
		\begin{equation*}
			\partial F\left(\bx^{k + 1} , {\bu^{k}}\right) = \nabla G\left(\bx^{k + 1} , \bu^{k}\right) + \left(\bo_{nN} , \partial \delta_{\BBB^{M}}\left(\bu^{k}\right)\right).
		\end{equation*}
		Therefore, by using the optimality condition of the updating rule for the $\bu$-block, we have that 		\begin{equation*}
			\bo_{nM} \in \nabla_{\bu} G\left(\bx^{k} , \bu^{k}\right) + \partial \delta_{\BBB^{M}}\left(\bu^{k}\right).
		\end{equation*}
		Setting,
		\begin{equation*}
			\by^{k + 1}:= \nabla G\left(\bx^{k + 1} , \bu^{k}\right) - \left(\bo_{nN} , \nabla_{\bu} G\left(\bx^{k} , \bu^{k}\right)\right),
		\end{equation*}
		we have that $\by^{k + 1} \in \partial F\left(\bx^{k + 1} , \bu^{k}\right)$. For simplicity, we define $\by^{k + 1} := \left(\by_{\bx}^{k + 1} , \by_{\bu}^{k + 1}\right)$. In addition, from the optimality condition of the updating rule of each ${\bar \bx}_{i}$-block, $1 \leq i \leq q$, we have that $\nabla_{{\bar \bx}_{i}} G\left(\bx^{k , i} , \bu^{k}\right) = \nabla_{{\bar \bx}_{i}} F\left(\bx^{k , i} , \bu^{k}\right) = \nabla F_{i}^{k}\left({\bar \bx}_{i}^{k + 1}\right) =  \bo$. Therefore,
		\begin{align*}
			\norm{\by_{\bx}^{k + 1}} & = \norm{\nabla_{\bx} G\left(\bx^{k + 1} , \bu^{k}\right)} \leq \sum_{i = 1}^{q} \norm{\nabla_{{\bar \bx}_{i}} G\left(\bx^{k + 1} , \bu^{k}\right)} = \sum_{i = 1}^{q} \norm{\nabla_{{\bar \bx}_{i}} G\left(\bx^{k + 1} , \bu^{k}\right) - \nabla_{{\bar \bx}_{i}} G\left(\bx^{k , i} , \bu^{k}\right)}.
		\end{align*} 
		Since $G\left(\cdot , \bu^{k}\right)$ (see \eqref{ReformulationMatrix}) is a quadratic function, there exist positive constants $\alpha_{i}$, $1 \leq i \leq q$, such that
		\begin{align*}
			\norm{\nabla_{{\bar \bx}_{i}} G\left(\bx^{k + 1} , \bu^{k}\right) - \nabla_{{\bar \bx}_{i}} G\left(\bx^{k , i} , \bu^{k}\right)} \leq \alpha_{i}\norm{\bx^{k + 1}  - \bx^{k , i}} \leq \alpha_{i}\norm{\bx^{k + 1}  - \bx^{k}},
		\end{align*}
		where the last inequality follows from the definition of $x^{k , i}$. Similarly, since  $\bu \rightarrow G\left(\bx , \cdot\right)$ is linear (see \eqref{ReformulationMatrix}), there exists a positive parameter $\beta > 0$ for which
		\begin{align*}
			\norm{\by_{\bu}^{k + 1}} = \norm{\nabla_{\bu} G\left(\bx^{k + 1} , \bu^{k}\right) - \nabla_{\bu} G\left(\bx^{k} , \bu^{k}\right)} \leq \beta\norm{\bx^{k + 1}  - \bx^{k}},
		\end{align*}
		Combining these bounds and setting $\rho_{2}=\sum_{i=1}^q\alpha_i+\beta > 0$, results in the following
		\begin{equation*}
			\norm{\by^{k + 1}} \leq \norm{\by_{\bar{\bx}}^{k + 1}} + \norm{\by_{\bu}^{k + 1}} \leq \rho_{2}\norm{\bx^{k + 1}  - \bx^{k}},
		\end{equation*}
		which completes the proof.
	\end{proof}
	Now, we are ready to provide the proof of our main result. \\
	
	\begin{table*}[!t]
	\begin{threeparttable}
	\caption{Computation and communication cost per iteration} \label{tbl:CC_compare}
	\centering
	{\footnotesize
	\begin{tabular}{l|llllll|ll}				
				&\multicolumn{6}{c|}{Sensor $i$} & \multicolumn{2}{c}{Total Computational Cost}\\	
				& Computational & Storage &\multicolumn{2}{c}{In Msg.}  & \multicolumn{2}{c|}{Out Msg.} &&    \\
				Method &Operations& Size &\# &size& \# & size&\multicolumn{1}{c}{Sequential}& \multicolumn{1}{c}{Parallelized} \\
				\hline
				\hline
				AM-FC &-&-&-&-&-&-&$O(nN^3+nM)$&-\\						
				AM-FD &$O(nM_i)$&$n$&$M_i$&$n$&$1$&$n$&$O(nM)$&-\\
				AM-CC & $O(nM_i)$&$n$&$M_i$&$n$&$1$&$n$&$O(nM)$&$O\left(\sum_{j=1}^q\max_{i\in\mathcal{C}_j} nM_i\right)$\\				
				SF &$O(nM_i)$&$n$&$M_i$&$n$&$1$&$n$&$O(nM)$&$O\left(\max_{i\in\mathcal{V}} nM_i\right)$\\
				AG & $O(nM_i)$&$n$&$M_i$&$n$&$1$&$n$&$O(nM)$&$O\left(\max_{i\in\mathcal{V}} nM_i\right)$\\
				ADMM-H &$O(nM_i+n^3T_i)\tnote{a}$&$nM_i$&$M_i$&$2n$&$M_i$&$2n$&$O(nM+n^3T)\tnote{b}$&$O\left(\max_{i\in\mathcal{V}\cup\mathcal{A}}nM_i +\max_{i\in\mathcal{V}} n^3T_i\right)$\\
			\end{tabular}}
		\smallskip
		\scriptsize
		\begin{tablenotes}
			\item[a] Notation $T_i$ refers to the number of iteration required by the non-convex Newton algorithm to converge. 
			\item[b] $T=\sum_{i\in\mathcal{V}} T_i$.
		\end{tablenotes}
	\end{threeparttable}
	\end{table*}
	
	{\em Proof of Theorem \ref{T:AbstrGlob}.} The proof is based on \cite[Theorem 4.3, p. 662]{STV2018}, which involves two requirements: (i) showing that $\left\{ \left(\bx^{k} , \bw^{k}\right) \right\}_{k \in \nn}$ is a bounded gradient-like descent sequence for minimizing $F$, and (ii) proving that $F$ is a semi-algebraic function. Since the function $G$ is a quadratic polynomial function (see \eqref{ObjG}) and the ball $\BBB$ is a semi-algebraic set, it follows immediately that $F$ is semi-algebraic. In order to prove the first requirement and in view of Lemmas \ref{L:Sufficient}, \ref{L:Bounded} and \ref{L:Bound}, it is left to show that (cf. \cite[Definition 4.2, p. 661]{STV2018})
	\begin{equation*}
		\limsup_{n_{k} \rightarrow \infty} F\left(\bx^{n_{k}} , \bw^{n_{k}}\right) \leq F\left(\bx^{\ast} , \bw^{\ast}\right),
	\end{equation*}
	where $\left(\bx^{\ast} , \bw^{\ast}\right)$ is a limit point of the subsequence $\left\{ \left(\bx^{n_{k}} , \bw^{n_{k}}\right) \right\}_{k \in \nn}$. Indeed, using the fact that $F\left(\bx^{k} , \bw^{k}\right) = G\left(\bx^{k} , \bw^{k}\right)$, for all $k \in \nn$, and by the  continuity of $G$ we obtain that
	\begin{align*}
		\limsup_{n_{k} \rightarrow \infty} F\left(\bx^{n_{k}} , \bw^{n_{k}}\right) = \limsup_{n_{k} \rightarrow \infty} G\left(\bx^{n_{k}} , \bw^{n_{k}}\right) = \lim_{n_{k} \rightarrow \infty} G\left(\bx^{n_{k}} , \bw^{n_{k}}\right) = G\left(\bx^{\ast} , \bw^{\ast}\right) \leq F\left(\bx^{\ast} , \bw^{\ast}\right),
	\end{align*}
	where the last inequality follows since we always have that $G\left(\bx , \bu\right) \leq F\left(\bx , \bu\right)$ for all $\bx \in \real^{nN}$ and $\bu \in \real^{nM}$. This completes the requirements and therefore we obtain from \cite[Theorem 4.3, p. 662]{STV2018} that $\Seq{\bx}{k}$ converges to some $\bx^{\ast}$, and, for any limit point $\bw^{\ast}$ of $\Seq{\bw}{k}$, $\left(\bx^{\ast} , \bw^{\ast}\right)$ is a critical point of $F$.

	 Since $\bw^{k + 1} = \bu^{k}$, $k \in \nn$, we obtain that the same statement is true for any limit point $\bu^{\ast}$ of $\Seq{\bu}{k}$, which proves that $\left(\bx^{\ast} , \bu^{\ast}\right)$ is a critical point of $F$. The result now follows from Proposition \ref{P:CriticalCoincide}. \qed
	
\section{Numerical Experiments} \label{Sec:Numerics}
	In order to provide a complete insight on the relation of our range of algorithms with the methods already available in the literature, the following algorithms are compared:
	\begin{enumerate}		
		\item AM-U-$q$, our Unifying AM Algorithm with $q$ clusters presented above in Section \ref{Sec:Unifying}. This is a semi-distributed version, where we randomly generated $q$ geographical clusters. The geographical clusters were created similarly to well-known LEACH algorithm described in \cite{heinzelman2002application}, where $q$ sensors are randomly chosen as cluster-heads and sensors are associated to the sensor with the closest cluster-head. However, in order to use only available information, the distance between the sensors and the cluster-heads are measured by the minimal number of edges between them, and in the case of a draw the sensor is associated with the cluster-head with minimal measured distance. 
		\item AM-FC, our Fully Centralized Alternating Minimization presented above in Section \ref{Sec:Algorithms}, which is equivalent to the AM-U-$q$ algorithm with $q = 1$.
		\item AM-FD, our Fully Distributed Alternating Minimization presented above in Section \ref{Sec:Algorithms}, which is equivalent to the AM-U-$q$ algorithm with $q = N$.
		\item AM-CC, the Colored Clustered Alternating Minimization, in which clusters are built from unconnected sensors using a graph coloring algorithm taken from \cite[Section 3.1]{BE2013-B}, such that the updates of each node in the clusters can be parallelized, and the algorithm is fully distributed. 
		\item SF, namely the Nesterov’s Accelerated Projected Grdaient method of \cite{CXG2015} which solves a convex relaxation version of Problem \eqref{Problem} implemented in a distributed fashion.
		\item EML, an SDP relaxation of Problem \eqref{Problem} presented in \cite{simonetto2014distributed}.
		\item ADMM-H, hybrid ADMM algorithm suggested by \cite{PE2018} which is a distributed method with an active transition from the convex relaxation suggested by \cite{CXG2015} to the non-convex model of Problem \eqref{Problem}. The ADMM-H requires several parameters, including a regularization parameter, and parameters that control the shift from the convex to the non-convex model. The parameters used in our experiments are given in Table \ref{tbl:numerical_param}.
	\end{enumerate}
	We define an iteration of each of these methods, as the period in which all sensor locations are updated. Specifically, for the fully distributed versions, each iteration consists of each sensor receiving updates messages from all its neighbors, updating its location estimation, and sending update messages to all its neighbors.
	
	We begin by discussing the computational complexity and communication requirements of each method. Table \ref{tbl:CC_compare} summarizes the computational cost, storage requirements, and communication cost (ingoing and outgoing messages) per iteration for each sensor $i$, as well as the computational cost per iteration of the parallelized and sequential implementation of each algorithm. In the methods compared we also include Nesterov's Accelerated Gradient (AG) Method \cite{N1983} which is used for initialization of some methods (see details below). For methods which are not parallelizable, such as AM-FC and AM-FD, we only give the sequential computational cost. 
	{For the centralized method AM-FC, the total sequential computation cost is derived by noting that step \eqref{AM:Stepx} requires calculating the inverse of an $N\times N$ matrix in $\real^n$ during initialization, while in step \eqref{AM:Stepu} we calculate the Euclidean norm of $M$ vectors in $\real^n$. As for the distributed methods AM-FD, AM-CC, SF and AG, each iteration of the algorithm calculates $n$ inner products of two vectors of size $M_i$ (in addition to step \eqref{AM:Stepu}). The total parallelized computational cost of AM-CC is determined by the cluster with greatest computation cost, since the method is parallelized within each cluster and sequential among the clusters. However, since SF and AG are fully parallelizable among the sensors, their total parallelized computation cost is determined by the greatest computation cost among the sensors. We note that in each iteration of ADMM-H, each non-anchor $i$ runs a non-convex Newton's method which may not converge, and the number of iterations it uses, which is denoted in the table by $T_{i}$, is unknown. Furthermore, when using the ADMM-H method, each sensor maintains a vector with a size which is proportional to the number of its neighbors, and at each iteration it sends \emph{different} messages to each neighbor, each consisting of a vector in $\real^n$. In contrast, all other distributional methods maintain only one vector in $\real^{n}$, and at each iteration each sensor broadcasts its own estimated location (a vector in $\real^{n}$) to all its neighbors, resulting in lower energy consumption and communication time.}  

	We now investigate the empirical performance of our proposed method for several networks. We use two benchmark networks available in the Standford’s Computational Optimization Laboratory web site \cite{YE1998}. The first network consists of $K = 500$ sensors with $m= 10$ anchor sensors, and the second network consists of $K = 1000$ sensors with $m = 20$ anchors. 
	{In addition, for the latter set of parameters we generated a random network for $K=1000$ in which all sensor locations (anchor and non-anchor) are randomly generated in the two dimensional box $\left[-0.5 , 0.5\right]^{2}$. Similarly, we generated four additional random networks with $K=2000$, $K=3000$, $K=5000$ and $K=10000$, and for all four networks $2\%$ of the sensors are anchors. We also used the random $K=1000$ network to create six more networks, which differ in the amount of anchors and in their average node degree. For all networks, the measurement noise for each of the distance measurements between the sensors is a Gaussian random variable with zero mean and a standard deviation of $\sigma$. Similarly to the two benchmark networks, we took $\sigma$ to be $7\%$ of the radio communication radius $r$.}
		
	For each network, a random initial point $\bx^{0}$ from a uniform distribution Unif(-0.01,0.01)$^{nN}$ is taken, and is used for all the compared methods. For the AM based-methods we always initialized $\bu^{0} = \bo_{nM}$. Note that the ADMM-H uses a two-stage approach in the sense that it first solves a relaxation of Problem \eqref{Problem}, as explained in Section \ref{Sec:Intorduction}, to obtain a more favorable starting point for solving the non-convex formulation, by using a ``smooth transition" from one stage to the other (for exact details see \cite{PE2018}). Such a two-stage approach enables reaching the vicinity of better solutions. We also utilized such an approach when applying our AM-U-$q$, AM-FD, and AM-CC algorithms. Specifically, before starting to run the algorithm we first run a fixed number of iterations (in our experiments we did $100$)  
	of the Accelerated Gradient (AG) method \cite{N1983} on the following optimization problem:	
	\begin{equation*}
		\min_{\bx \in \real^{nN}} F\left(\bx , \bu^{0}\right),
	\end{equation*}	
	where $\bu^{0}$ is given. The AG method is applicable due to the fact that, given $\bu^{0}$, the objective function is a strongly convex quadratic function of $\bx$, which has a Lipschitz continuous gradient with a parameter that can be bounded by $L=2\left(2d_{\text{max}} + m\right)$, where $d_{\text{max}}$ is the maximal degree of any non-anchor sensor in the network with respect to other non-anchors (see Table \ref{tbl:numerical_param}). We add the suffix ``-AG100'' to the methods' names to denote this initialization. Note that running the AG method on this problem can be done in a fully distributed and fully parallelized fashion, with the same communication cost per iteration as AM-FD and AM-CC (for more details see Table~\ref{tbl:CC_compare} and Section \ref{sec:ADMM_compare}). All methods were ran for $10^{3}$ iterations, in order to have the same number of communication rounds. The parameters of the networks, as well as the specific parameters used for each method, are summarized in Table~\ref{tbl:numerical_param}. 

\begin{table}
	\caption{Network and Method Parameters}\label{tbl:numerical_param}
	\centering
	\begin{tabular}{lllll|lll|l}
		\multicolumn{5}{c|}{Network Parameters} & \multicolumn{4}{c}{Method Parameters} \\	
		\multirow{2}{*}{K}&\multirow{2}{*}{m}&\multirow{2}{*}{r}&\multirow{2}{*}{$\sigma$}&Average&\multicolumn{3}{c|}{ADMM-H} & AG \\
		&&&&\multicolumn{1}{c|}{$M_i$}&$\epsilon_{c}$&$\zeta_{c}$&$\tau_{c}$ &$d_{\text{max}}$\\
		\hline
		\multicolumn{9}{c}{Benchmark}\\
		\hline
		500&10&0.3&0.02&14.15& 0.04&  0.2 & 0.015&70 \\
		1000&20&0.1&0.007&11.01& 0.003 & 0.05& 0.002 &50 \\
		\hline
		\hline
		\multicolumn{9}{c}{Random}\\
		\hline
		\multicolumn{9}{c}{Changing $K$}\\
		\hline
		1000&20&0.061&0.00427&11.09& \multirow{5}{*}{0.003} & \multirow{5}{*}{0.05}& \multirow{5}{*}{0.002} &25 \\
		2000&40&0.043&0.00301&11.22&  & &  &23 \\
		3000&60&0.035&0.00245&11.07&  & & &22 \\
		5000&100&0.029&0.00203&12.79&  & & &27 \\
		10000&200&0.025&0.00172&18.43&  & &  &37 \\
		\hline
		\multicolumn{9}{c}{Changing $m$}\\
		\hline
		985&5&0.061&0.00427&10.90& \multirow{3}{*}{0.003} & \multirow{3}{*}{0.05}& \multirow{3}{*}{0.002} &25 \\
		990&10&0.061&0.00427&10.97& && &25 \\
		1010&30&0.061&0.00427&11.19& && &25 \\	
		\hline
		\multicolumn{9}{c}{Changing $M_i$}\\
		\hline
		1000&20&0.049&0.00340&7.21& \multirow{3}{*}{0.003} & \multirow{3}{*}{0.05}& \multirow{3}{*}{0.002} &18 \\
		1000&20&0.057&0.00398&9.71& && &24 \\
		1000&20&0.067&0.00466&13.03& && &27 \\
	\end{tabular}
\end{table}
\footnotetext{We note that in the benchmark instances, the communication radius is as reported in \cite{YE1998}. However, in these instances, not all sensor pairs with distance lower than the communication radius generate edges. In fact, for all sensors the number of neighbors was truncated, and anchor sensors have neighbors which are further away than the communication radius. The values of $\sigma$ for these instances are taken from \cite{PE2018}.}

	We now discuss the criteria for comparing the methods. For each instance we generated $R = 50$ random realizations of the measured distances, and run each method on each of these inputs.
	We then compared the methods using several criteria - the average objective value (OBV) of Problem \eqref{Problem}, the root average mean squared error (RMSE), and the estimated bias. The OBV will serve as an indication of how good are the methods in solving Problem \eqref{Problem}, whereas the RMSE, given by
	\begin{equation*}
		\text{RMSE} =\sqrt{\sum_{l=1}^R\sum_{i \in \VVV} \frac{\norm{\bx_{i}^{l} - \bx_{i}^\text{real}}^2}{R}},
	\end{equation*}
	where $\bx_{i}^\text{real}$ is the real location of sensor $i$, will provide the true average approximation error of the tested method, and $\bx^l_{i}$ is the estimated location for realization $l\in \{1,2,\ldots,R\}$. Although the tested method may be biased we are going to compare the RMSE with the root of the Cramer-Rao lower bound (CRLB) (see \cite{patwari2003relative} and \cite{patwari2005locating}). For completeness, we will also provide an estimation of the bias of each method, given by:
	\begin{equation*}
	\widehat{\text{bias}} ={\sum_{l=1}^R \frac{{\bx_{i}^{l} - \bx_{i}^\text{real}}}{R}}.
	\end{equation*}
	We also compare the methods' running times, where the running times of parallelizable steps in each iteration were computed as the maximal computation time over all parallelized components. 
	
	\begin{figure}[t]
		\includegraphics[width=0.5\textwidth]{./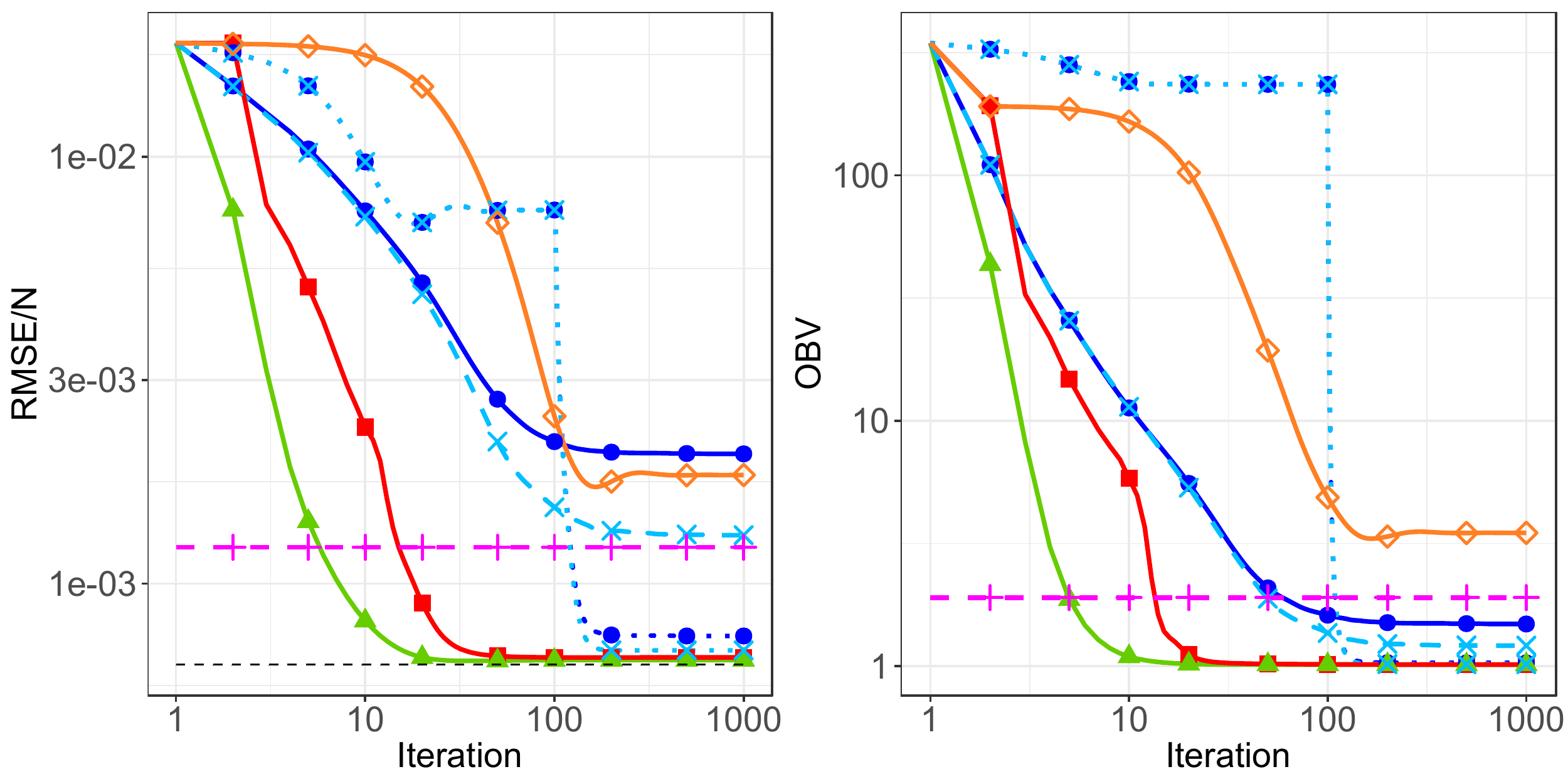}
		\includegraphics[width=0.5\textwidth]{./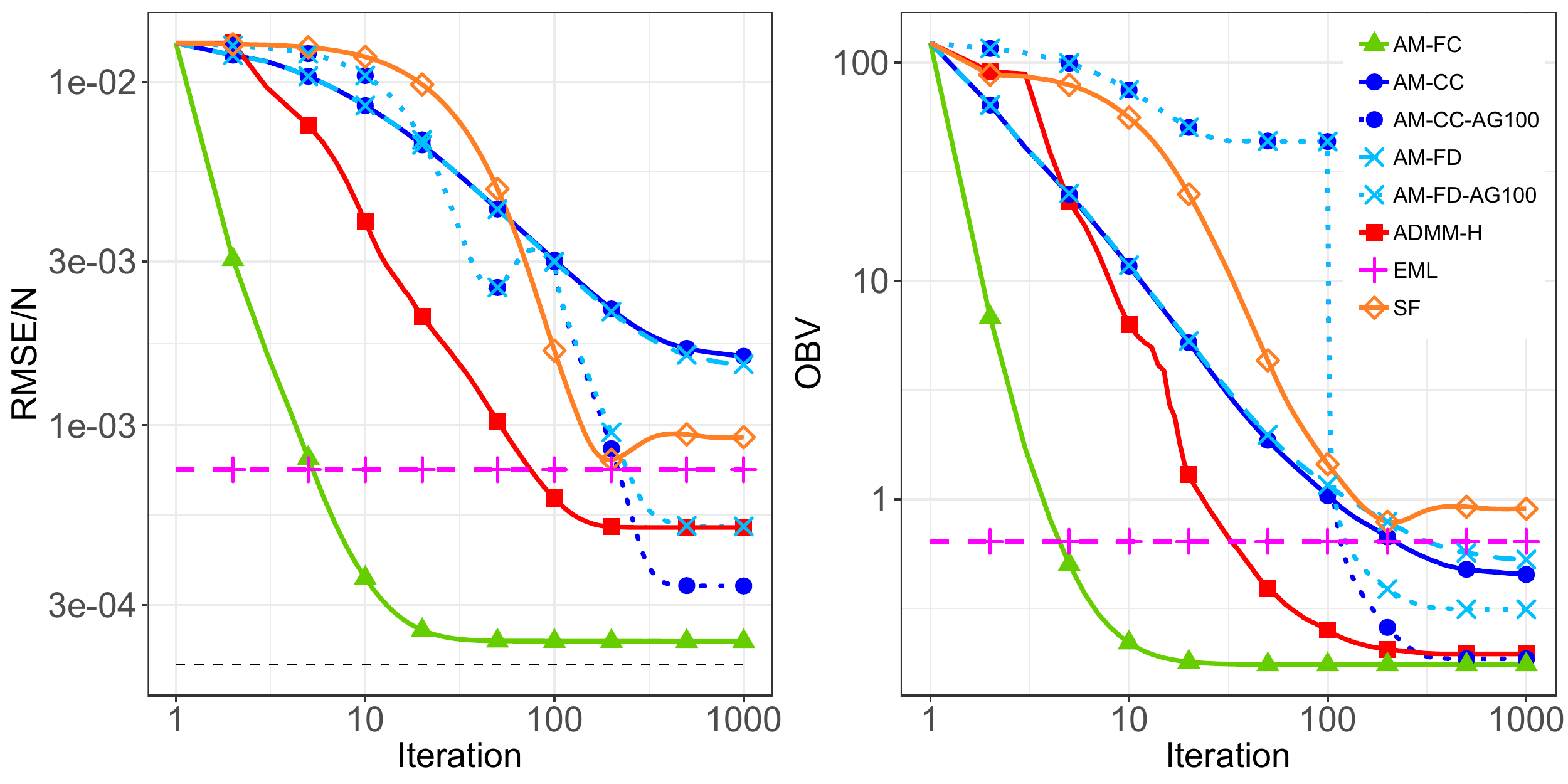}
		\caption{Comparison with convex relaxation on the benchmark networks with $K = 500$ (two left graphs) and $K=1000$ (two right graphs). The root of the CRLB divided by $N$ is given by the dashed black line. }\label{fig:relaxation_1000_benchmark}
	\end{figure}
	
	\begin{figure}[b]
		\centering
		\includegraphics[width=.7\textwidth]{./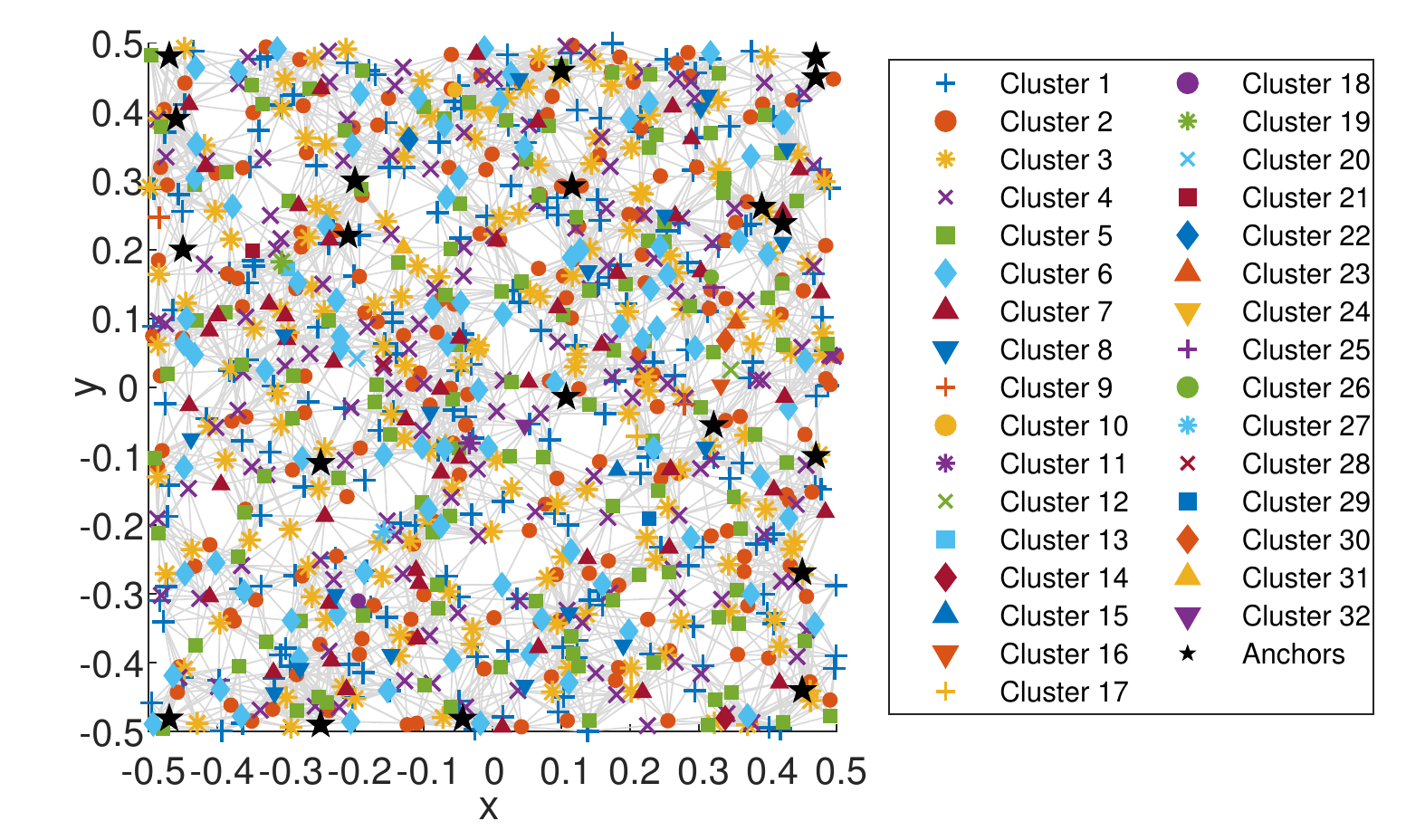}
		\caption{
			The colored clusters used by the AM-CC algorithm for the benchmark instance with $K = 1000$. There are 32 clusters, each with non-neighboring sensors, with sizes ranging from $1$ to $200$.}				
		\label{fig:colored_clustering}
	\end{figure}
	
	\begin{figure}[t]
		\centering
		\includegraphics[width=.7\textwidth]{./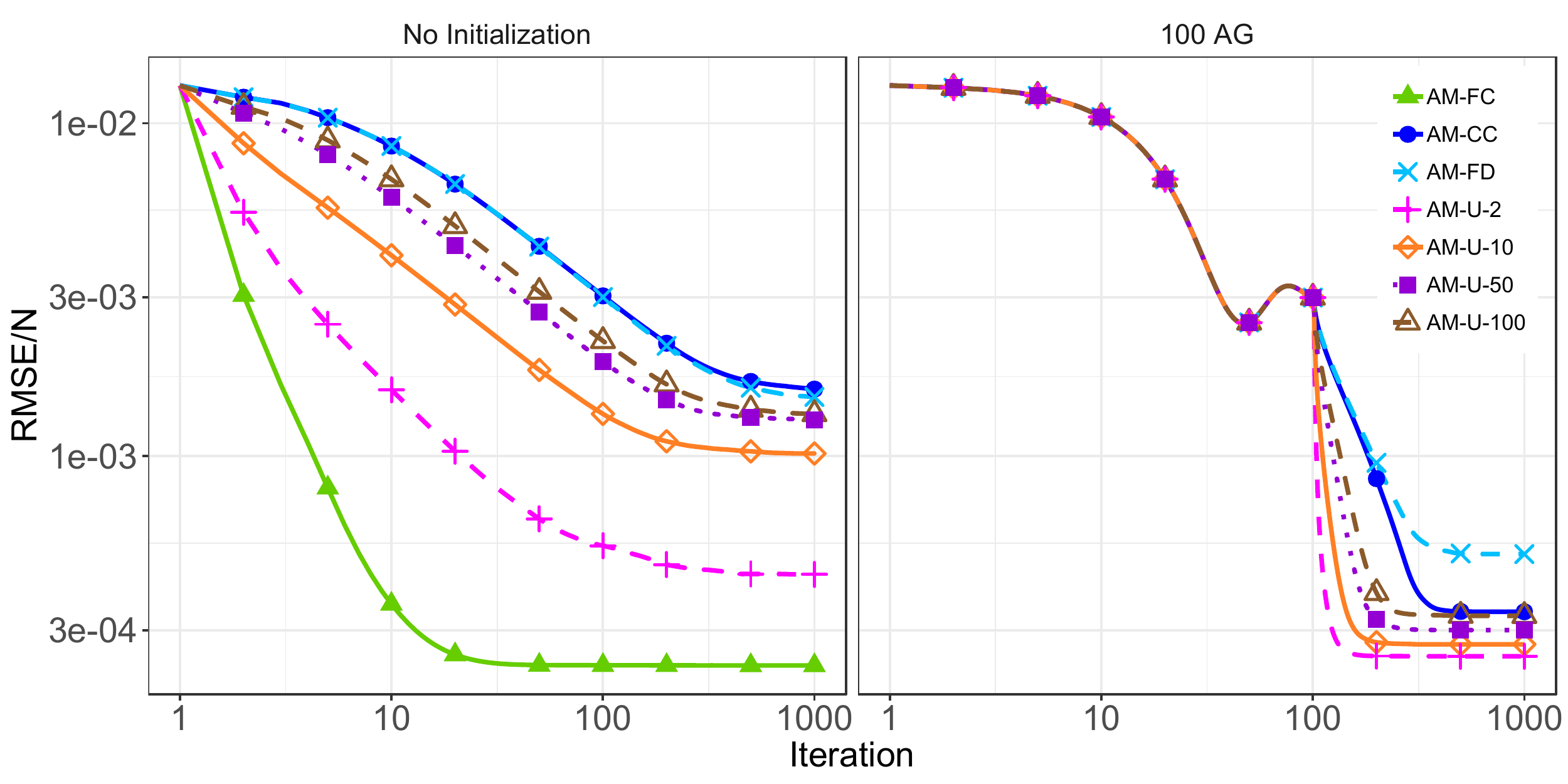}
		\caption{RMSE Comparison for AM-U-$q$ with various number of clusters on benchmark example with $K = 1000$. The left graph shows the comparison without AG initialization and the right graph shows the comparisons with AG initialization.} \label{fig:RMSE_clustering}
	\end{figure}

	All experiments were ran on an Intel(R) Xeon(R) Gold 6254 CPU @ 3.10GHz with a total of 300GB RAM and 72 threads, using MATLAB 2019a, where each method was allowed to run on only one thread and up to 16GB of RAM.

\subsection{Solving the Original Formulation vs. Relaxation}
	We begin by presenting a comparison between the centralized method AM-FC, and the distributed methods AM-FD and AM-CC, for solving Problem \eqref{Problem}, with the SF method of \cite{CXG2015} and EML of \cite{simonetto2014distributed}, which both solve different relaxations of Problem \eqref{Problem}. We note that the results for the EML method are reported as the final value obtained by the centralized algorithm, and therefore the iterations count has no significance. Indeed, while the EML has a distributed version, which is also described in \cite{simonetto2014distributed}, when attempting to run this distributed version on the benchmark networks, each iteration took approximately a second for each sensor, making this method not practical for sequential experiments. 
	The results of running $10^{3}$ iterations of these methods, starting from a common random point $\bx^0\in U(-0.01,0.01)^{2N}$  on the benchmark networks with $K=500$ and $K = 1000$, are presented in Figure~\ref{fig:relaxation_1000_benchmark}. The summary of the results after $10^3$ iterations is also available in Table~\ref{tbl:Benchmark_compare}.
	For $K=500$ and $K=1000$ the AM-CC used 127 clusters and 32 clusters, respectively, where the clusters for $K=1000$ are depicted in Figure \ref{fig:colored_clustering}. 

	We note that the AM-FC has lower RMSE than both SF and EML, illustrating that the non-relaxed problem leads to a better location estimation than the global optimal solution of the relaxed problem. In contrast, the distributed methods without initilization AM-FD and AM-CC have worse location estimation in terms of RMSE than the EML and sometimes SF method (although both the EML and SF methods have worse average function value). However, when initialized by the AG method (where the AG method is used in the first $100$ out of the $1000$ iterations) both AM-FD-100AG and AM-CC-100AG have superior RMSE and function values to EML and SF, demonstrating the benefit of using the AG method as an initialization procedure before solving the non-convex model. Moreover, the benefit of using parallelization through clustering, as done in  AM-CC and  AM-CC-100AG, is evident in the short running times of these methods, which require less than a second to run. This is in contrast to the distributed but non-parallizable AM-FD, which takes almost as much time as the centralized AM-FC, despite the problem's size. We also see that for all methods that solve the non-convex Problem~\eqref{Problem}, there is a high correlation between OBV and RMSE, whereas both EML and SF have a higher OBV than AM-CC and AM-FD while having a lower value of RMSE, suggesting that Problem \eqref{Problem} is a good predictor of the RMSE performances for low enough objective values.

	\begin{table}[b]
	\caption{Method comparison on benchmark networks}\label{tbl:Benchmark_compare}
	\centering
		\begin{tabular}{l|lllll}
			Method & RMSE & $\norm{\widehat{\text{bias}}}$ & OBV & \multicolumn{2}{c}{Avg. run time (seconds)} \\
			&Avg. (stdv)&&Avg. (stdv) &    Parallelized & Sequential \\
			\hline 
			\hline
			\multicolumn{6}{c}{Benchmark $K=500$}\\
			\hline
			AM-FC & {\bf 3.24e-01 } (1.28e-02) & 0.049 & 1.02e+00 (2.75e-02) & - & 4.28 \\ 
			EML & 5.96e-01 (2.17e-02) & 0.492 & 1.90e+00 (7.05e-02) & - & 25.28 \\ 
			AM-CC & 9.86e-01 (5.88e-02) & 0.859 & 1.49e+00 (1.05e-01) & 0.90 & 4.27\\ 
			AM-CC-AG100 & 3.69e-01 (3.54e-02) & 0.132 & 1.03e+00 (2.91e-02) & 0.87 & 3.97 \\ 
			AM-FD & 6.36e-01 (1.51e-01) & 0.330 & 1.21e+00 (1.51e-01) & - & 3.70 \\ 
			AM-FD-AG100 & 3.42e-01 (2.60e-02) & 0.071 & 1.02e+00 (2.81e-02) & - & 3.43 \\ 
			SF & 8.80e-01 (2.39e-02) & 0.798 & 3.49e+00 (1.16e-01) & 0.24 & 63.09 \\ 
			ADMM-H & {\bf 3.29e-01 }(1.36e-02) & 0.492 & {\bf 1.01e+00} (2.75e-02) & 1.40 & 452.6 \\ 
			\hline
			\multicolumn{6}{c}{Benchmark $K=1000$}\\
			\hline
			AM-FC & {\bf 2.30e-01} (1.26e-02) & 0.105 & {\bf 1.75e-01} (4.45e-03) & - & 8.99 \\ 
			EML & 7.28e-01 (1.89e-02) & 0.697 & 6.41e-01 (2.68e-02) & - & 59.43 \\ 
			AM-CC & 1.56e+00 (1.78e-01) & 1.429 & 4.52e-01 (3.96e-02) & 0.30 & 7.31 \\ 
			AM-CC-AG100 & {\bf 3.34e-01} (3.05e-02) & 0.218 & {\bf 1.86e-01} (6.96e-03) & 0.24 & 6.31 \\ 
			AM-FD & 1.48e+00 (2.18e-01) & 1.259 & 5.29e-01 (5.85e-02) & - & 8.19 \\ 
			AM-FD-AG100 & 4.98e-01 (5.95e-02) & 0.415 & 3.13e-01 (1.38e-02) & - & 7.11 \\ 
			SF & 9.05e-01 (2.12e-02) & 0.875 & 9.04e-01 (3.07e-02) & 0.23 & 107.07 \\  
			ADMM-H & 4.94e-01 (1.80e-01) & 0.308 & 1.95e-01 (1.71e-02) & 1.38 & 875.95 \\ 
			\hline
	\end{tabular}
\end{table}

\subsection{The Effect of Clustering}
	We next explore the effect of geographical clustering on the performance of our method, with and without AG initialization. We note that AG initialization is not shown for the AM-FC algorithm since no improvement was achieved by adding this initialization. 
	
	The RMSE results for $q = 2 , 10 , 50, 100 $ clusters are presented in Figure~\ref{fig:RMSE_clustering}, and in order to illustrate the output of the geographical clustering, the clusters produced for $q = 2$ and $q = 10$ are presented in Figure~\ref{fig:geo_clusters}.
	
	As one might expect, when the number of clusters decreases the algorithm becomes more centralized and produces lower RMSE values. Moreover, initializing the methods via the AG method improved the results significantly, reducing the RMSE of AM-U-2 and AM-U-10 close to that of the AM-FC. Thus, we can conclude, that if we can not run the fully centralized version AM-FC due to computational or storage restriction, there is still a benefit of using the AM-U method with some level of clustering, since together with an AG initialization, this could lead to a lower RMSE value. We note that, as shown in the previous section, we observed a high correspondence between the OBV and RMSE values for different clustering techniques, where a lower OBV implied a lower RMSE value.
		
	\begin{figure}[t]
		\centering
		\includegraphics[width=0.7\textwidth]{./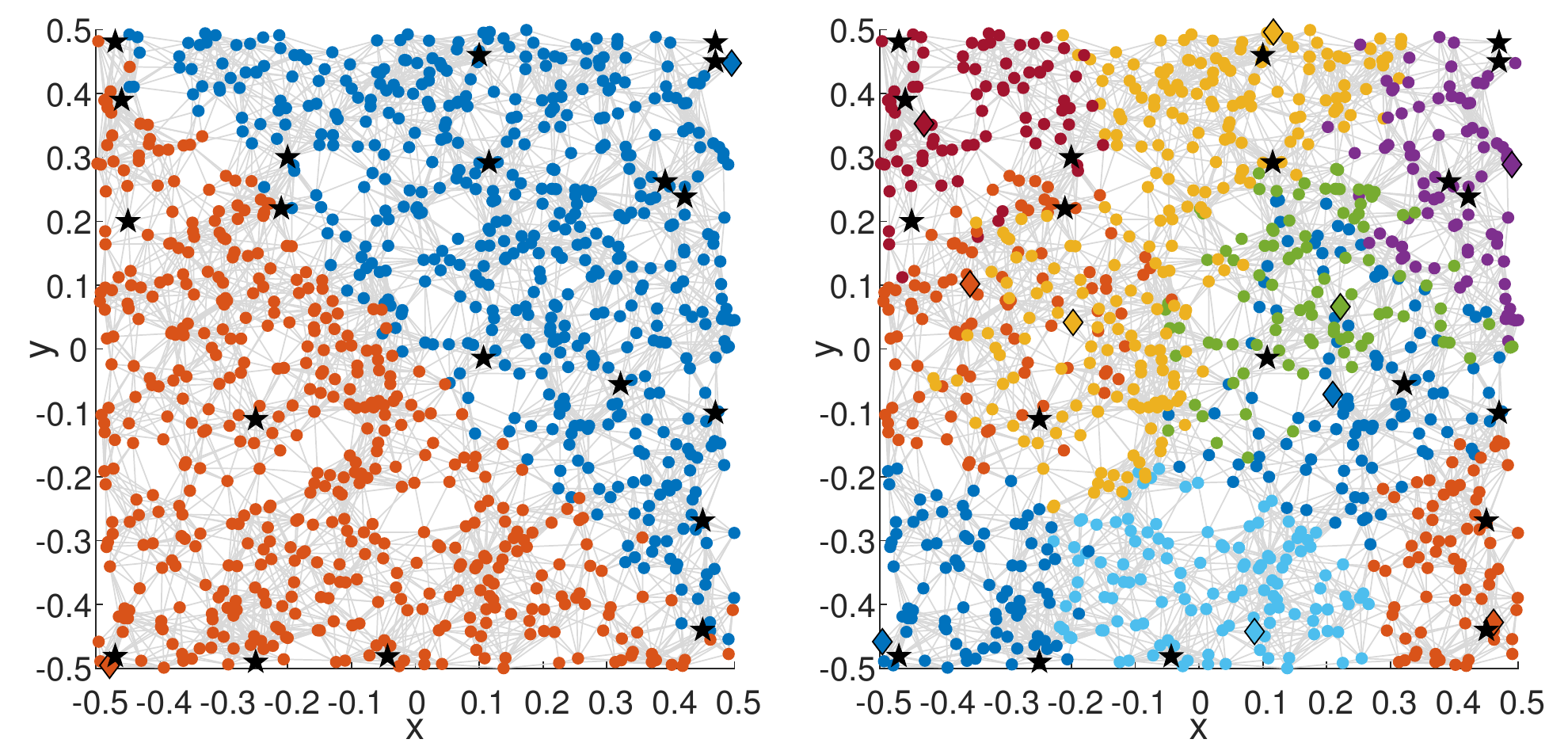}
		\caption{Results of geographical clustering for the benchmark example with $K = 1000$, with $q = 2$ clusters (right) and $q = 10$ clusters (left). The sensors which belong to each cluster are represented as dots with the same color and its cluster head is denoted by a diamond with the same color, whereas anchors are denoted by stars.} \label{fig:geo_clusters}
	\end{figure}
 
\subsection{Comparison to ADMM-H}\label{sec:ADMM_compare}
	We now compare the performance of the AM-U method to this of the ADMM-H method of \cite{PE2018}. 
	As we saw in Figure~\ref{fig:relaxation_1000_benchmark} and Table \ref{tbl:Benchmark_compare}, ADMM-H has superior performance to that of AM-CC-100AG for the benchmark with $K=500$ but inferior for the benchmark with $K=1000$. Moreover, the parallelized time of ADMM-H is five times larger than that of AM-CC-100AG.
	
	We now investigate the performance of our leading centralized and distributed methods, AM-FC and AM-CC-100AG, vs. that of ADMM-H for various settings, which are specified in Table~\ref{tbl:numerical_param}. 
	
	We start with investigating the performance for random networks of sizes $K=1000,2000,3000,5000,10000$, where the anchor number is 2\% of network size, $r$ is chosen to be the minimal radius which ensures that there exists a CRLB (Fisher Information matrix is invertible) and resulting in an average $M_i$ closest to $11$, and $\sigma$ is chosen to be 7\% of $r$. Note that high sequential running times of ADMM-H did not allow  to run it on the larger networks. We present the results of this sensitivity analysis in Figure~\ref{fig:RMSE_sizes} and Table~\ref{tbl:Random_compare}.
	We observe that the AM-CC-100AG achieves lower RMSE values than ADMM-H after about 100 iterations, and continues to improve the RMSE values with more iterations while the ADMM-H already converged.  
	Moreover, AM-FC achieves the lowest RMSE followed by AM-CC-100AG, except for the case for $K=10000$. Indeed, for the case of $K=10000$, AM-CC-100AG achieves the lowest RMSE although its objective value is higher than that of AM-FC, implying that in this case the objective function is not necessarily a good predictor of the RMSE value, which might be due to a high variance in the connectivity of the sensors due to the network's size.
	
	Next we investigate the impact of the number of anchors $m$, the average degree $M_i$ of the non-anchor nodes in the network (or equivalently the radius $r$), and the initialization of $\bx^0$ on the results. For all these experiments we take as a base-line the network with $K=1000$, and change one parameter at each time. The results are given in Figure~\ref{fig:RMSE_sensitivity} and Table~\ref{tbl:Random_compare}.
	
	When changing the number of anchors, for $m=5$ the AM-FC obtains the worst RMSE although it obtains the best function value. We deduce that in this case, Problem \eqref{Problem} is not a good predictor of the estimation ability of a method. However, as $m$ increases, the performance of all methods improve, where AM-FC again provides the lowest RMSE eventually equal to the CRLB value, followed by AM-CC-100AG.  
	When changing the degree of the network, we choose a minimal degree by choosing the minimal $R$ which ensures that the graph is connected and a maximal $R$ which results in an average $M_i$ of 13. All methods improve as the degree gets larger. However, as expected, the influence is more significant for the distributed methods. Regardless, AM-CC-100AG is superior to ADMM-H for all tested degree values. Finally, we see that different initialization of the starting point has very limited impact on AM-FC and AM-CC after 1000 iterations, while the ADMM-H algorithm improves as the variance of the initialization increases, making it less robust to different initialization procedures.
	
	\begin{table}[H]
		\caption{Method comparison on random networks}\label{tbl:Random_compare}
		\centering
		\resizebox{0.9\textwidth}{!}{
			\begin{tabular}{l|lllll}
				Method & RMSE & $\norm{\widehat{\text{bias}}}$ & OBV & \multicolumn{2}{c}{Avg. run time (seconds)} \\
				&Avg. (stdv)&&Avg. (stdv) &    Parallelized & Sequential \\
				\hline 
				\hline
				\multicolumn{6}{c}{Random $K=1000$, $m=20$, $Deg=11.09$, $\sqrt{CRLB}=$ 8.03e-01}\\
				\hline
				AM-FC & {\bf 8.21e-01} (5.54e-02) & 0.761 & {\bf 7.09e-02} (2.34e-03) & - & 9.28 \\ 
				AM-CC & 4.70e+00 (1.04e-01) & 4.652 & 2.26e-01 (1.42e-02) & 0.19 & 6.97 \\ 
				AM-CC-AG100 & {\bf 2.95e+00} (8.21e-02) & 2.890 & {\bf 1.25e-01} (4.74e-03) & 0.17 & 6.26 \\ 
				ADMM & 4.99e+00 (4.15e-02) & 4.964 & 1.53e-01 (5.58e-03) & 1.52 &  876.18\\ 
				\hline
				\multicolumn{6}{c}{Random $K=2000$}\\
				\hline
				AM-FC & {\bf 7.58e-01} (4.72e-02) & 0.703 & {\bf 7.53e-02} (2.67e-03) & - & 29.77 \\ 
				AM-CC & 3.31e+00 (5.88e-02) & 3.243 & 2.52e-01 (1.09e-02) & 0.26 &  19.15\\ 
				AM-CC-AG100 & {\bf 1.45e+00} (6.28e-02) & 1.379 & {\bf 1.07e-01} (3.58e-03) &  0.23 & 17.09\\ 
				ADMM & 3.42e+00 (2.32e-02) & 3.387 & 1.70e-01 (5.72e-03) & 1.67 & 1959.50 \\ 
				\hline
				\multicolumn{6}{c}{Random $K=3000$}\\
				\hline
				AM-FC & {\bf 5.13e-01} (2.81e-02) & 0.461 & {\bf 7.00e-02} (1.46e-03) & - & 59.95 \\ 
				AM-CC & 3.16e+00 (4.13e-02) & 3.120 & 2.31e-01 (1.10e-02) & 0.30  & 32.25\\ 
				AM-CC-AG100 & {\bf 1.29e+00} (4.36e-02) & 1.221 & {\bf 9.16e-02} (4.36e-03) & 0.27 &  29.09\\ 
				ADMM & 3.26e+00 (2.25e-02) &3.239  & 1.45e-01 (3.41e-03) & 1.94 & 2909.90\\ 
				\hline
				\multicolumn{6}{c}{Random $K=5000$}\\
				\hline
				AM-FC & {\bf 2.39e-01} (1.59e-02) & 0.169 & {\bf 9.23e-02} (9.26e-04) & - & 156.30 \\ 
				AM-CC & 3.28e+00 (4.58e-02) &  3.230& 3.13e-01 (1.08e-02) & 0.44 &  69.46\\ 
				AM-CC-AG100 & {\bf 1.30e+00} (4.12e-02) &1.271  & {\bf 1.13e-01} (2.31e-03) & 0.40 & 62.69 \\ 
				\hline
				\multicolumn{6}{c}{Random $K=10000$}\\
				\hline
				AM-FC & 1.02e+00 (1.27e-02) & 1.009 & {\bf 2.34e-01} (1.64e-03) & - & 819.25 \\ 
				AM-CC & 2.60e+00 (3.61e-02) & 2.556 & 5.17e-01 (1.85e-02) &  0.86 &247.93 \\ 
				AM-CC-AG100 & {\bf 6.72e-01} (4.20e-02) & 0.624 & {\bf 2.53e-01} (3.76e-03) & 0.78 &223.1\\ 
				\hline
				\multicolumn{6}{c}{Random $K=985$, $m=5$, $Deg=10.9$, $\sqrt{CRLB}=$ 8.19e-01}\\
				\hline
				AM-FC & 1.06e+01 (3.70e-01) &10.583  & {\bf1.02e-01} (3.93e-03) & - & 8.88 \\ 
				AM-CC & 9.93e+00 (1.09e-01) & 9.884 & 2.49e-01 (8.93e-03) & 0.19 &  6.81\\ 
				AM-CC-AG100 & {\bf 9.08e+00} (1.10e-01) & 9.040 & {\bf 1.78e-01} (8.12e-03) & 0.17 & 6.13 \\ 
				ADMM-H & 1.01e+01 (3.46e-02) & 10.077& 2.13e-01 (5.86e-03) & 1.61 & 939.04 \\ 
				\hline
				\multicolumn{6}{c}{Random $K=990$, $m=10$, $Deg=10.97$, $\sqrt{CRLB}=$ 8.07e-01}\\
				\hline
				AM-FC & {\bf 2.04e+00} (1.26e-01) & 1.992 & {\bf 7.27e-02} (2.52e-03) & - & 9.38 \\ 
				AM-CC & 6.68e+00 (1.21e-01) &6.633  &  2.49e-01 (1.40e-02) & 0.21 & 7.19 \\ 
				AM-CC-AG100 & {\bf 4.92e+00} (1.07e-01) & 4.857 & {\bf 1.66e-01} (6.24e-03) & 0.19 & 6.41 \\  
				ADMM & 7.33e+00 (3.81e-02) & 7.310 & 1.77e-01 (7.17e-03) &  1.70 & 964.16\\ 
				\hline
				\multicolumn{6}{c}{Random $K=1010$, $m=30$, $Deg=11.19$, $\sqrt{CRLB}=$ 7.99e-01}\\
				\hline
				AM-FC & {\bf 8.08e-01} (8.38e-02) & 0.743 & {\bf 7.60e-02} (2.84e-03) & - & 9.67 \\ 
				AM-CC & 3.49e+00 (6.62e-02) &  3.425& 2.15e-01 (2.06e-02) & 0.22 &  7.25\\ 
				AM-CC-AG100 & {\bf 1.57e+00} (8.89e-02) & 1.47 & {\bf 1.04e-01} (7.13e-03) & 0.19 & 6.49 \\ 
				ADMM & 3.42e+00 (4.02e-02) &  3.396& 1.19e-01 (7.41e-03) & 1.69 & 958.21\\ 
				\hline
				\multicolumn{6}{c}{Random $K=1000$, $m=20$, $Deg=7.2$}\\
				\hline
				AM-FC & {\bf 2.03e+00} (1.37e-01) &  1.918& {\bf 2.39e-02} (9.87e-04) & - & 5.90 \\ 
				AM-CC & 5.44e+00 (5.29e-02) &5.405  & 5.84e-02 (2.70e-03) & 0.11 &  4.78\\ 
				AM-CC-AG100 & {\bf 4.83e+00} (4.57e-02) & 4.792 & {\bf 4.25e-02} (2.46e-03) & 0.10 &4.30  \\ 
				ADMM & 6.00e+00 (2.78e-02) & 5.984 & 4.68e-02 (1.82e-03) & 1.28 & 811.11 \\ 
				\hline
				\multicolumn{6}{c}{Random $K=1000$, $m=20$, $Deg=9.71$, $\sqrt{CRLB}=8.07e-01$}\\
				\hline
				AM-FC & {\bf 1.00e+00} (7.22e-02) &  0.935& {\bf 5.02e-02} (1.75e-03) & - & 8.57 \\ 
				AM-CC &  4.86e+00 (7.74e-02) & 4.810 & 1.61e-01 (8.82e-03) & 0.21 &  6.77\\ 
				AM-CC-AG100 & {\bf 3.56e+00} (8.04e-02) & 3.513 & {\bf 9.36e-02} (4.33e-03) & 0.18 &6.07  \\ 
				ADMM & 5.31e+00 (3.20e-02) & 5.286 &  1.05e-01 (4.03e-03) & 1.74 &  961.34\\ 
				\hline
				\multicolumn{6}{c}{Random $K=1000$, $m=20$, $Deg=13.03$, $\sqrt{CRLB}= 1.81e-01$}\\
				\hline
				AM-FC & {\bf 6.99e-01} (8.09e-02) & 0.637 & {\bf 1.08e-01} (3.90e-03) & - & 10.83 \\ 
				AM-CC & 4.65e+00 (1.06e-01) & 4.601 & 3.66e-01 (1.82e-02) & 0.23 &  8.13\\ 
				AM-CC-AG100 & {\bf 2.69e+00} (1.83e-01) & 2.608 & {\bf 1.90e-01} (1.20e-02) & 0.21 & 7.34 \\ 
				ADMM-H & 4.71e+00 (5.06e-02) &4.678 & 2.33e-01 (9.35e-03) & 2.42 &  1027.10\\ 
		\end{tabular}}
	\end{table}
	
	In general, when the objective function of Problem \eqref{Problem} is a good predictor of the performance, the centralized method AM-FC obtains the lowest RMSE and converges after the fewest iterations. Moreover, the distributed method AM-CC-100AG generally has lower RMSE values than ADMM-H after $100$ iterations, although it takes longer to converge. We attribute this to the fact that ADMM-H has many parameters which need to be tuned well for each setting, and are therefore less robust to slight changes, including the choice of starting point. This is a further advantage of the AM-U method, which is parameter-free, and the AG initialization that relies only on parameters given by the network structure and can be easily approximated. Furthermore, the AM-CC method is at least five times faster than ADMM-H.

	\begin{figure}[t]
		\centering
		\includegraphics[width=0.8\textwidth]{./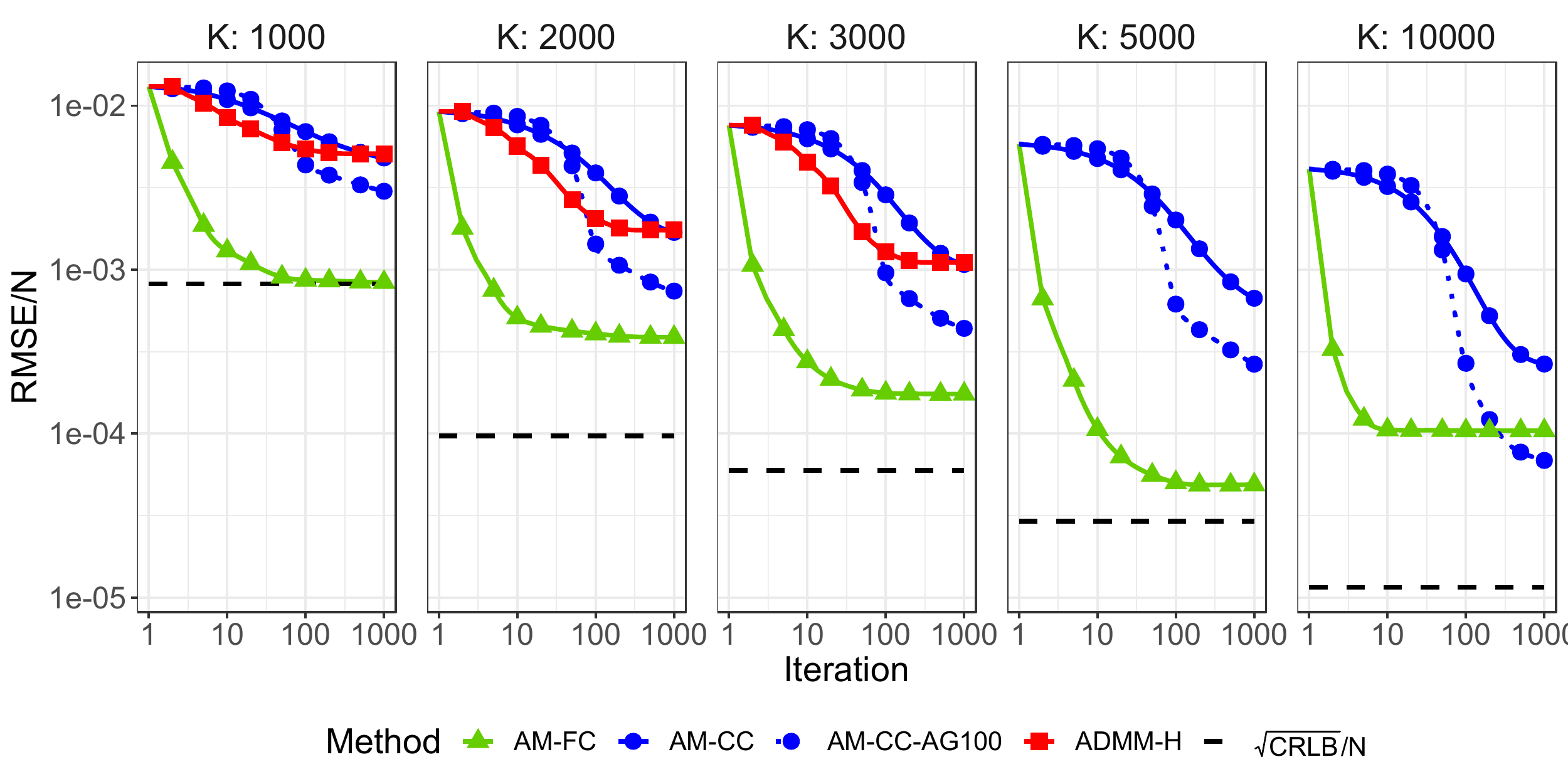}
		\caption{RMSE of AM centralized and distributed methods vs. ADMM-H for various network sizes.} \label{fig:RMSE_sizes}
	\end{figure}
	\begin{figure}[th]
		\centering
		\includegraphics[width=0.8\textwidth]{./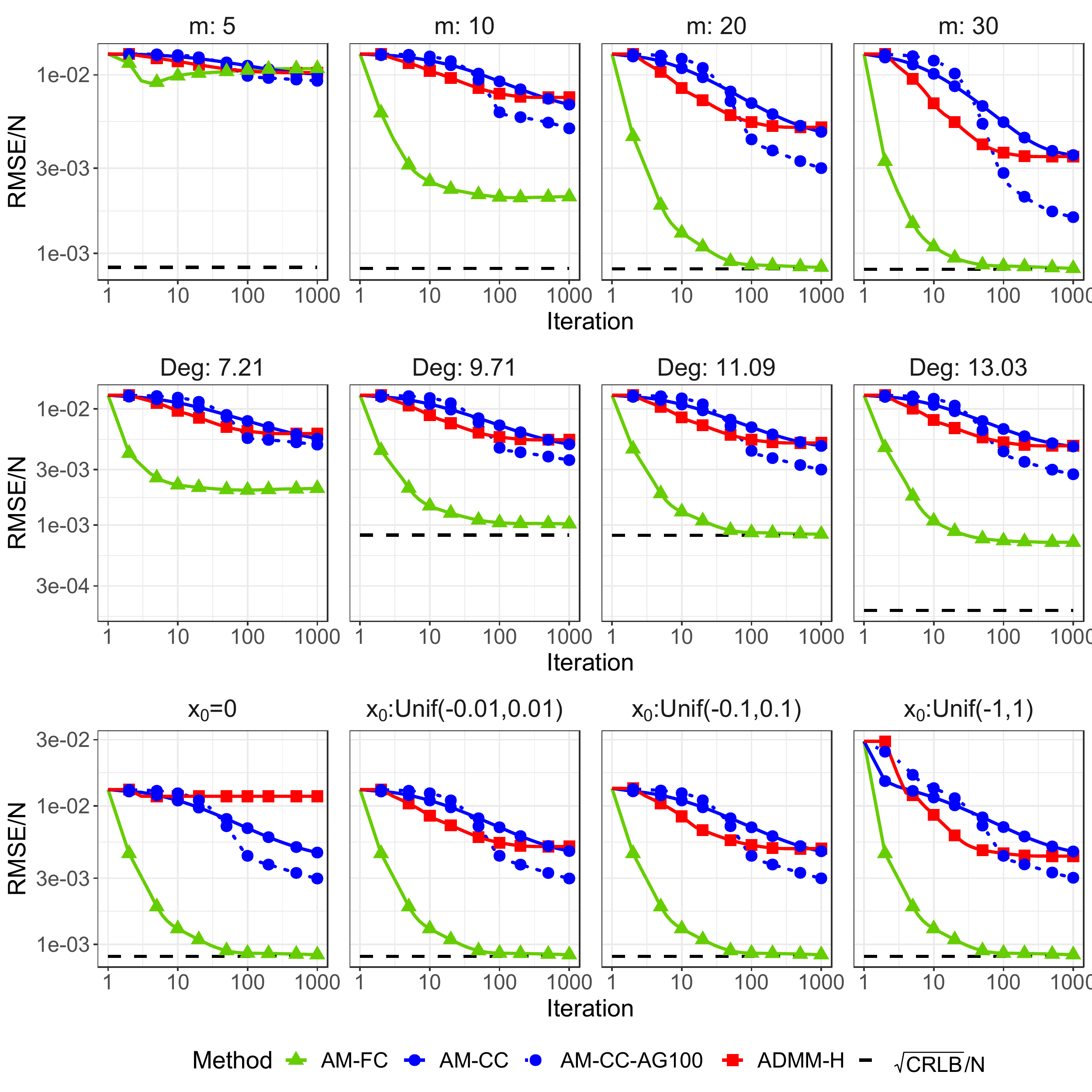}
		\caption{RMSE of AM centralized and distributed methods vs. ADMM-H on a network with $N=980$, for various number of anchors $m$ (first row), various node degree $M_i$ (second row), and various initialization of $\bx_0$ (third row).} \label{fig:RMSE_sensitivity} 
	\end{figure}

\section{Summary}
	In this paper we suggest a first-order framework based on the Alternating Minimization technique to solve the non-convex and non-smooth formulation of the WSN localization problem. This framework, provides a range of implementation, from fully distributable to fully centralized, while also allowing for partial parallelization. We prove that the algorithms generated by this general framework globally converge to critical points of the non-convex and non-smooth problem. We show in our numerical experiments that the fully centralized version is both scalable (up to 10000 sensors), and provides near optimal solution for various instances, while the distributed versions with proper initialization are still superior to existing methods. We also would like to emphasize that as far as we know, there is no other centralized method that efficiently solves the WSNL problem for such large networks.
	
\section{Acknowledgments}
We would also like to thank Nicola Piovesan for graciously providing the code for the ADMM-H algorithm from \cite{PE2018}, and Andrea Simonetto and Geert Leus for graciously providing the code for the EML algorithm from \cite{simonetto2014distributed}. Last but not least we would like to thank Amir Beck for introducing us to the WSN problem.

\begin{appendices}
	\section{Proofs of Proposition \ref{P:CriticalCoincide} and Proposition \ref{P:BasicPro}}
	
	In this section we prove the two promised results: Proposition \ref{P:CriticalCoincide} and Proposition \ref{P:BasicPro}. 
	
	\textit{Proof of Proposition \ref{P:CriticalCoincide}.} Using the conditions obtained in \eqref{OptX}, \eqref{OptU1} and \eqref{OptU2}, we obtain that $\bx_{i}^{\ast} - \bx_{j}^{\ast} \in \partial \delta_{\BBB}\left(\bu_{ij}^{\ast}\right)$ for all $\left(i , j\right) \in \EEE_{1}$ and $\bx_{i}^{\ast} - \ba_{j} \in \partial \delta_{\BBB}\left(\bu_{ij}^{\ast}\right)$ for all $\left(i , j\right) \in \EEE_{2}$. Since $\delta_{\BBB}\left(\cdot\right)$ is a proper, lower semicontinuous and convex function it follows from \cite[Theorem 4.20, p. 104]{B2017-B} that $\bu_{ij}^{\ast} \in \partial \delta_{\BBB}^{\ast}\left(\bx_{i}^{\ast} - \bx_{j}^{\ast}\right)$ for all $\left(i , j\right) \in \EEE_{1}$ and $\bu_{ij}^{\ast} \in \partial \delta_{\BBB}^{\ast}\left(\bx_{i}^{\ast} - \ba_{j}\right)$ for all $\left(i , j\right) \in \EEE_{2}$, where $\delta_{\BBB}^{\ast}$ is the Fenchel conjugate of $\delta_{\BBB}$. From \cite[Example 2.31, p. 28]{B2017-B} it also follows that $\delta_{\BBB}^{\ast}\left(\cdot\right) = \norm{\cdot}$. Hence $\bu_{ij}^{\ast} \in \partial \norm{\cdot}\left(\bx_{i}^{\ast} - \bx_{j}^{\ast}\right)$, $\left(i , j\right) \in \EEE_{1}$, and that $\bu_{ij}^{\ast} \in \partial \norm{\cdot}\left(\bx_{i}^{\ast} - \ba_{j}\right)$, $\left(i , j\right) \in \EEE_{2}$. The result now follows by using these facts in \eqref{OptX}.
\medskip
	
	\textit{Proof of Proposition \ref{P:BasicPro}.} We start by assuming, without the loss of generality, that the sub-graph built from the sensors with edges $\EEE_{1}$ is connected. While this is not necessarily the case, all the arguments given below can be applied to each connected component, and from the fact that the entire network is connected, according to Assumption \ref{A:AssumptionA}(ii), we can derive the same result.
	
	Now, in order to prove item (i), we first show that $\Ker{\tilde{\bqq}} = \Span\{\bone_{N}\}$. Indeed, let $\bv \in \Ker{\tilde{\bqq}}$. Each row of $\tilde{\bqq}$ corresponds with some $\left(i , j\right) \in \EEE_{1}$, and includes only two non zero entries, which are $1$ at the $i$-th entry and $-1$ at the $j$-th entry. Thus, we obtain that $v_{i} = v_{j}$ for all $\left(i , j\right) \in \EEE_{1}$, and since the sub-graph is connected we obtain that $\bv \in \Span\{\bone_{N}\}$. The converse inclusion trivially follows from the structure of $\tilde{\bqq}$.
		
	Now we will show that $\Ker{\tilde{\bqq}} \cap \Ker{\tilde{\baa}} = \{\bo_{N}\}$. Let $\bv \in \Ker{\tilde{\bqq}} \cap \Ker{\tilde{\baa}}$.
	Since the graph is connected (see Assumption \ref{A:AssumptionA}(i)), and since we have at least one anchor (see Assumption \ref{A:AssumptionA}(ii)), there exists a sensor $i\in\mathcal{V}$ such that $i$ is connected to an anchor $j\in\mathcal{A}$. Thus, by construction, the row associated with edge $(i,j)\in\mathcal{E}_2$ in the matrix $\tilde{\baa}$ equals to the unit vector $\be_i^T$.  Therefore, since $\bv \in \Ker{\tilde{\baa}}$ we obtain that $v_{i} = 0$, and since $\bv \in \Ker{\tilde{\bqq}} = \Span\{\bone_{N}\}$, we must have that $\bv = \bo_{N}$, which completes the proof. 
	 		
	We now show that ${\bpp}$ is positive definite. Take $\bv \in \real^{N}$, then
	\begin{align*}
		\bv^{T}\tilde{\bpp}\bv = \bv^{T}\left(\tilde{\bqq}^{T}\tilde{\bqq} + \tilde{\baa}^{T}\tilde{\baa}\right)\bv = \bv^{T}\left(\tilde{\bqq}^{T} , \tilde{\baa}^{T}\right)\begin{pmatrix} \tilde{\bqq} \\ \tilde{\baa} \end{pmatrix}\bv = \norm{\begin{pmatrix} \tilde{\bqq} \\ \tilde{\baa} \end{pmatrix}\bv}^2.
	\end{align*}
	Since $\tilde{\bpp}$ is obviously a positive semi-definite matrix and $\bv^{T}\tilde{\bpp}\bv = \bo$ if and only if $\bv \in \Ker{\tilde{\bqq}} \cap \Ker{\tilde{\baa}} = \{\bo_{N}\}$, $\tilde{\bpp}$ is positive definite. By \cite[Property IX, p. 27]{graham2018kronecker}), the eigenvalues of $\bpp$ and $\tilde{\bpp}$ are the same, and so $\bpp$ is also positive definite.
	
	Item (ii) now follows immediately from item (i) since $\bx \rightarrow G\left(\bx , \bu\right)$ is a quadratic function (see \eqref{ObjG}) for any fixed $\bu$.
	
	From \cite[Lemma 2.42, p.32]{B2014-B} it follows that $G$ is coercive and since $F \geq G$, the result of Item (iii) follows. Lastly, Item (iv) follows from \cite[Theorem 2.14, p. 20]{B2017-B}.

\end{appendices}

\printbibliography

\end{document}